\def\ps@pprintTitle{%
	\let\@oddhead\@empty
	\let\@evenhead\@empty
	\def\@oddfoot{\reset@font\hfil\thepage\hfil}
	\let\@evenfoot\@oddfoot
}
\newtheorem{thm}{Theorem}[section]
\newtheorem{lem}{Lemma}[section]
\theoremstyle{definition}
\newtheorem{defn}{Definition}[section]
\newtheorem{ex}{Example}[section]
\newtheorem{res}{Result}[section]
\tikzstyle{startstop} = [rectangle, rounded corners, 
\tikzstyle{io} = [trapezium, 
\tikzstyle{process} = [rectangle, 
\tikzstyle{decision} = [rectangle, 
\tikzstyle{arrow} = [thick,->,>=stealth]
\tikzset{
	block/.style={rectangle, draw, thick, text width=2.5cm, minimum height=1cm, text centered},
	line/.style={draw, -latex'}
}
\theoremstyle{remark}
\newtheorem{rem}{Remark}[section]
\numberwithin{equation}{section}
\useunder{\uline}{\ul}{}
\newlist{steps}{enumerate}{1}
\setlist[steps, 1]{label = Step \arabic*:}
\newlist{notes}{enumerate}{1}
\setlist[notes]{label=Note: ,leftmargin=*}
\begin{document}

\begin{frontmatter}
	
	
	
	\title{\textbf{ Softplus and Neural Architectures for Enhanced Negative Binomial INGARCH Modeling}}
	
	\author[label1]{Divya Kuttenchalil Andrews\corref{cor1}}
	\author[label2]{N. Balakrishna}
	\address[label1]{Cochin University of Science and Technology, Kochi, India.}
	\address[label2]{Indian Institute of Technology, Tirupati, India.}

	\author{}
	\address{}
	
	\begin{abstract}
		The study addresses a significant gap in the literature by introducing the Softplus negative binomial Integer-valued Generalized Autoregressive Conditional Heteroskedasticity (sp NB- INGARCH) model and establishing its stationarity properties, alongside methodology for parameter estimation. Building upon this foundation, the Neural negative binomial INGARCH (neu - NB-INGARCH) model is proposed, designed to enhance predictive accuracy while accommodating moderate non-stationarity in count time series data. A simulation study and data analysis demonstrate the efficacy of the sp NB-INGARCH model, while the practical utility of the neu -	 
		NB - INGARCH model is showcased through a comprehensive analysis of a healthcare data. Additionally, a thorough literature review is presented, focusing on the application of neural networks in time series modeling, with particular emphasis on count time series. In short, this work contributes to advancing the theoretical understanding and practical application of neural network-based models in count time series forecasting. \\
		
		\noindent Keywords: neural network; softplus; INGARCH; count time series.\\\\
		
	\end{abstract}

\end{frontmatter}

\section{Introduction}\label{intro}

Count time series data are pervasive in numerous fields, including public health, economics, and environmental sciences. One of the key challenges in modeling such data is accommodating overdispersion, a phenomenon where the variance exceeds the mean, commonly observed in real-world count data. Among the many models developed to handle overdispersion, the negative binomial distribution stands out due to its flexibility and ability to capture a wide range of dispersion patterns. This flexibility has made it a cornerstone in the modeling of count data, particularly when the Poisson distribution’s constant mean-variance assumption proves inadequate.

In time series contexts, the Integer-valued Generalized Autoregressive Conditional Heteroskedasticity (INGARCH) model is widely used due to its ability to capture temporal dependence in count data \citep{ferland2006integer}. However, linear INGARCH models suffer from certain limitations, notably their inability to represent negative autocorrelation in the data \citep{weiss2022soft}. This limitation often leads to misrepresentations of the underlying dynamics, especially when negative dependencies are present.

To address these shortcomings, \cite{weiss2022softplus} introduced the softplus INGARCH model, leveraging the softplus transformation to ensure positivity of the conditional mean while providing greater flexibility in capturing temporal patterns. Although \cite{weiss2022softplus} primarily explored the properties of the softplus INGARCH model under the Poisson distribution assumption, they extended the model to the negative binomial distribution for data analysis. Notably, the stationarity properties of the negative binomial softplus INGARCH model were considered as a future work, leaving a significant gap in the theoretical development of the model.

Another crucial challenge in count time series modeling arises with data exhibiting non-stationary autocorrelation behaviour, a scenario frequently encountered in daily life, such as hospital admissions, customer arrivals, or online activity. To accommodate non-stationarity, \cite{jahn2023artificial} proposed an extension to the softplus INGARCH framework using artificial neural networks (ANNs), introducing the concept of neural INGARCH models. This approach leverages the flexibility of ANNs to capture complex nonlinear dependencies that traditional models may struggle with. \par 
The integration of ANNs into time series modeling has been explored in various contexts, specially by employing hybrid models. A common technique for such models involve decomposition of a time series into its linear and nonlinear form and applying ANNs to the latter form for better predictive performance. Using this, \cite{time_zhang_2003} proposed a two-step hybrid methodology that combines  Autoregressive Integrated Moving Average (ARIMA) and ANN models and showed promising forecasting efficiency when tested on three data sets.  \cite{medeiros2006building} introduced another hybrid model, integrating an autoregressive (AR) model with a single-hidden-layer ANN. This approach allows for efficient model specification with minimal computational expense. An alternative innovative approach was developed by \cite{hassan2007fusion}, combining a Hidden Markov Model (HMM), Artificial Neural Networks (ANN), and Genetic Algorithms (GA) to forecast financial market behavior.\par \cite{artificial_khashei_2010} revived the ARIMA - ANN combination wherein the first stage of the hybrid methodology involves fitting an ARIMA model; and in the second stage, the residuals from the ARIMA model, along with the original data, are used as inputs to the neural network. \cite{wu2010hybrid} presents a hybrid model that combines an adaptive Wavelet Neural Network (AWNN) with time series models like  Autoregressive Moving Average with Exogenous Inputs (ARMAX) and GARCH to predict daily electricity market values. Their approach produces more accurate forecasts than those reported in previous studies. \cite{gheyas2011novel} incorporates and combines a regression neural network model with multiple machine learning algorithms. This hybrid model leverages the combined strengths of the algorithms, though it comes with a high computational cost. \par \cite{young2015predicting} further advanced the framework to build a three - dimensional (3D) hydrodynamic model blended with an ANN model (using backpropagation neural network (BPNN)) and an ARMAX model to more accurately predict water level fluctuations. The innovative approach of combining the 3D hydrodynamic model with an ANN has demonstrated enhanced prediction accuracy for water level changes. \cite{wang2016financial} proposed the combination of Elman Recurrent Neural Networks (ERNN) with a Stochastic Time Effective Function (STNN), i.e., the ERNN-STNN model. The empirical results demonstrate that this neural network outperforms linear regression, Complexity Invariant Distance (CID), and Multi-Scale CID (MCID) analysis methods. When compared to other models, such as BPNN, it shows superior performance in financial time series forecasting.  \cite{rahayu2017hybrid} focussed on  a hybrid of time series regression, ARIMA and ANN to provide better forecasts of currency flow data at Bank Indonesia. \cite{ahmedteal} conducted an extensive systematic review and comparison of neural network models for time series forecasting, which were published in the period 2006 - 2016. \par
Eventhough a lot many works have come up with regard to application of neural networks in continuous time series using normal assumption, not much have come up in the area of integer-valued time series. Two interesting works that paved path to such research were the softplus response for unbounded counts \citep{weiss2022softplus} and the soft-clipping response for bounded counts \citep{weiss2022soft}, which produce nearly linear models while still allowing for negative autocorrelation.  This subsequently led to the work by \cite{jahn2023artificial} on nonlinear INGARCH models where the response function is represented by a single hidden layer feedforward artificial neural network, referred to as a neural INGARCH model with conditional distributions as Poisson for unbounded count time series. Further, \cite{flexible_jahn_2024}  introduced a novel framework that combines spatio-temporal regression techniques with Artificial Neural Networks (ANNs) by leveraging ANN's universal approximation property to model arbitrary spatial patterns using geographic coordinates as regressors. 
Despite these advancements, most studies utilizing neural network-based count models have majorly focused on the Poisson distribution. Applications involving the negative binomial distribution remain relatively scarce, despite its importance in modeling overdispersed count data. Moreover, the theoretical underpinnings of neural network-based negative binomial INGARCH models remain underdeveloped, limiting their broader adoption in practice.

In this paper, we extend the scope of the existing literature by developing the theoretical properties of the softplus negative binomial INGARCH model, including its stationarity. We define and analyze the neural negative binomial INGARCH model, providing theoretical insights and applying it to real-world non-stationary count time series. To showcase the practical relevance of these models, we conduct two data analyses - a study using weekly Syphilis counts and the second using hourly emergency department arrivals at a hospital, a highly relevant application where overdispersion and non-stationarity are prominent.\par

By addressing these gaps, we aim to underscore the utility of the negative binomial distribution in capturing the complex dynamics of count time series and highlight the advantages of integrating ANN-based approaches in this domain.

\section{A Softplus INGARCH model}\label{spn}

While INGARCH models exhibit an ARMA-like autocorrelation structure, the range of autocorrelation function (ACF) of the model is often more restricted than that of standard ARMA models, as negative ACF values are generally not feasible. This limitation arises due to parameter constraints necessary to ensure nonnegative values (counts) in the process. To include the possibility of negative ACF values, conditional regression models with a log link might be considered; however, this approach sacrifices both the linear conditional mean and the ARMA-like ACF structure. Additionally, the lack of analytic expressions for the mean, variance, and ACF in a log-linear INGARCH model complicates the estimation procedure.

To address this issue, \cite{weiss2022softplus} introduced a new family of conditional regression models for stationary count processes \( \{X_t\}_{t \in \mathbb{Z}} \) that applies a softplus function to link the conditional mean \( \lambda_t = E(X_t \mid X_{t-1}, \dots) \) to a linear expression involving past observations \( X_{t-i} \) and previous conditional means \( \lambda_{t-j} \). In particular, they considered the softplus Poisson INGARCH model:
\begin{equation}
	\label{sppois}
	X_t \mid \mathscr{F}_{t-1} \sim  Poi(\lambda_t),  \quad
	\lambda_t=sp\left(\alpha_0+\sum_{i=1}^p\alpha_i X_{t-i}+\sum_{j=1}^{q}\beta_j \lambda_{t-j}\right), \; \forall t \in \mathbb{Z},
\end{equation}
where sp() denotes the softplus function (to be discussed at length in the following subsection), 
$\mathbb{Z} = \{ \ldots -2,-1,0,1,2,\ldots\}$  is the set of integers,  $\mathscr{F}_{t-1}$ is the $\sigma$-field generated by $\{X_{t-1}, X_{t-2},\ldots\}$, $ \alpha_i, \beta_j$ are real-valued for $i= 0,1,\ldots ,p, \; j=1, \ldots q$,  and $Poi$ represents the Poisson distribution.  \cite{zhusp} studied the properties of a softplus model assuming beta negative binomial (BNB) as the conditional distribution focussing on the distribution's heavy tail properties. However, no study concerning the simple negative binomial distribution has come up, despite the distribution gaining large prominence in the context of count time series models in the past decade. So, the present article is aimed at deriving properties of the softplus negative binomial distribution and extending it to an artificial neural network INGARCH  setup. We now look into a brief literature surrounding the softplus function and some of its useful properties.
\subsection{The softplus function}
\cite{dugas2000incorporating} proposed the softplus function as $s(x) = ln(1+exp(x)), \; \forall x \in \mathbb{R}$, where $\mathbb{R}$ is the set of real numbers. \cite{zhang2018permuted} noted that the function is positive, continuous and differentiable on  $\mathbb{R}$, indicating its potential in multinomial regression. The function is also frequently viewed as a smoother alternative to the rectified linear unit function, defined as \( \text{ReLU}(x) = \max\{0, x\} \), considered to be one of the primary nonlinear activation function for deep neural networks \citep{nair2010rectified}. The softplus function used in the present article is the generalized version of the function (See \cite{mei2017neural}):
\begin{equation}
	\label{spfun}
	sp(x) = c\; ln\left(1+exp\left(\frac{x}{c}\right)\right),
\end{equation}
the extra tuning parameter, $c>0$, was introduced to control the deviation between $sp$ and $ReLU$ functions and by default, we assume $c=1$. Interestingly, when $c=1$, the first derivative of $sp$ with respect to $x$, $sp'(x)= \frac{1}{1+exp(-x)}$, is the logistic function (or the sigmoid function) which is also a popular activation function in neural networks. \cite{weiss2022softplus} stated that $sp(x)$ has additional desirable properties:
\begin{enumerate}
	\item[(i)] As the value of $x$ increases, the deviation between $sp(x)$ and $ReLU(x)$ decreases. In particular, $\lim_{c \to 0} sp(x) = ReLU(x)$.
	\item[(ii)] $sp(x)$ is Lipschitz with Lipschitz constant $1$.
	\item[(iii)] The softplus function $sp(x)$ is strictly monotone increasing with $sp(0) = c\; ln(2)$ and $sp'(x) < 1$.  Thus, it can be bounded from above by $ c\; ln(2) + max\{0,x\}$. In fact, the following inequality holds:
	\begin{equation}
		\label{ineq}
		ReLU(x) < sp(x) \leq c\;ln(2) + ReLU(x). 
	\end{equation}
\end{enumerate}
These properties are useful for proving the first and second-order stationarity conditions for the softplus INGARCH model. Next, we define the softplus negative binomial INGARCH model.
\subsection{A softplus negative binomial INGARCH model}
\cite{zhu2011negative} introduced the linear negative binomial INGARCH model with the conditional distribution $NB(n,p_t)$ defined as:
\begin{equation}
	\label{nb}
	Pr[X_t=x \mid \mathscr{F}_{t-1}] = \binom{x + n - 1}{n - 1} p_t^n (1 - p_t)^x,\; x= 0, 1, 2, \ldots,
\end{equation}
where $n>0$  is fixed over time and $0<p_t \leq1$.

\begin{defn}
	If the conditional distribution of $\{X_t\}$ is defined as in (\ref{nb}), i.e., $X_t\mid \mathscr{F}_{t-1} \sim NB(n,p_t)$, and
	\begin{equation}
		\label{spnbrec}
		\lambda_t= E[X_t\mid \mathscr{F}_{t-1}]=sp\left(\alpha_0+\sum_{i=1}^p\alpha_i X_{t-i}+\sum_{j=1}^{q}\beta_j \lambda_{t-j}\right), \; \forall t \in \mathbb{Z},
	\end{equation}
	and the coefficients reprise their definitions specified in (\ref{sppois}) and (\ref{spfun}), then the model is called \textit{softplus negative binomial INGARCH} model and denoted by \textit{sp NB-INGARCH(p,q)}. 
\end{defn}
\noindent Naturally, we obtain the conditional variance of $X_t$ as 
\begin{equation}
	\label{cdvar}
	Var[X_t\mid \mathscr{F}_{t-1}] = \frac{n(1-p_t)}{p_t^2} = \lambda_t\left(1+ \frac{\lambda_t}{n}\right).
\end{equation}
With this information, by properties of expectation, we can proceed to compute the unconditional mean and variance respectively as: 
\begin{equation}
	\label{exp}
	\begin{aligned}
		E[X_t] &= E[\lambda_t],\\ 
		Var[X_t] &= E[Var[X_t \mid \mathscr{F}_{t-1}]] + Var[E[X_t \mid \mathscr{F}_{t-1}]]\\ 
		&= E\left[\lambda_t\left(1+\frac{\lambda_t}{n}\right)\right] + Var[\lambda_t]\\ 
		& \quad > E[\lambda_t].
	\end{aligned}
\end{equation}
Hence, the model can very well represent overdispersed count time series. Note that as $n\to \infty$, $p_t \to 1$, and $n(1-p_t) \to \mu$, the negative binomial distribution converges to a Poisson distribution with mean $\mu$ \citep{casellab}. Through the following theorems, the stationarity and existence of first and second order moments for the softplus NB - INGARCH model are stated and proved. For the sake of simplicity, we will limit our focus to the first-order case where $p=q=1$.
\begin{thm}
	\label{thm1}
	Consider the sp NB-INGARCH(1,1) model. If $\overline{\alpha_1} + \overline{\beta_1} < 1$ and $|\beta_1| < 1 $, then there is a stationary ergodic process  $\{X_t\}_{t \in \mathbb{Z}}$ that satisfies (\ref{spnbrec}) and (\ref{exp}) and has $E[X_t] < \infty$, where $\overline{\alpha_1} = max\{0,\alpha_1\}$, $\overline{\beta_1} = max\{0,\beta_1\}$.
\end{thm}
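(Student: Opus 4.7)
The plan is to mirror the stationarity argument \cite{weiss2022softplus} developed for the softplus Poisson INGARCH(1,1), taking advantage of the fact that the drift inequality relies only on the conditional-mean identity $E[X_t \mid \mathscr{F}_{t-1}] = \lambda_t$, which is shared between the Poisson and negative binomial laws. The two hypotheses will play complementary roles: $\overline{\alpha_1}+\overline{\beta_1}<1$ supplies the Foster--Lyapunov drift bound via inequality (\ref{ineq}), while $|\beta_1|<1$ drives the geometric forgetting of initial conditions in the $\lambda$-recursion that pins down a unique stationary solution.

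First, I would establish a uniform first-moment bound. Using (\ref{ineq}), the subadditivity $\max(0,a+b+c)\le\max(0,a)+\max(0,b)+\max(0,c)$, and $\max(0,\alpha y)=\overline{\alpha}\,y$ for $y\ge 0$, I obtain
\begin{equation*}
E[\lambda_t \mid \mathscr{F}_{t-1}] \le b + \overline{\alpha_1}\,X_{t-1} + \overline{\beta_1}\,\lambda_{t-1}, \qquad b := c\ln 2 + \max(0,\alpha_0).
\end{equation*}
A further conditioning on $\mathscr{F}_{t-2}$ and the NB identity $E[X_{t-1}\mid\mathscr{F}_{t-2}]=\lambda_{t-1}$ collapse this to the geometric-drift inequality $E[\lambda_t] \le b + (\overline{\alpha_1}+\overline{\beta_1})\,E[\lambda_{t-1}]$, which iterates to $\sup_t E[\lambda_t] \le b/(1-\overline{\alpha_1}-\overline{\beta_1})$; combined with (\ref{exp}) this yields $E[X_t]<\infty$.

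Second, to build a stationary solution I would use an iterative coupling scheme. Fix iid $\{U_t\}_{t\in\mathbb{Z}}\sim U(0,1)$ and realise $X_t$ as $F^{-1}_{NB(n,\,n/(n+\lambda_t))}(U_t)$. For each truncation $m\ge 1$, set $\lambda^{(m)}_{-m}=0$ and propagate the recursion forward using the common $U_t$. The 1-Lipschitz property of $sp$ together with stochastic monotonicity of the NB inverse CDF in $\lambda$ (which makes the signs of $\lambda^{(m)}-\lambda^{(m')}$ and the coupled $X^{(m)}-X^{(m')}$ coherent) will yield a controlled $L^1$ recursion; combined with the uniform moment bound from the preceding step and the geometric factor $|\beta_1|$, this shows $\{\lambda_t^{(m)}\}_m$ is Cauchy in $L^1$, with limit $\lambda_t$ measurable with respect to $\{U_s:s\le t\}$. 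The resulting process $\{(X_t,\lambda_t)\}$ then satisfies (\ref{spnbrec}), is strictly stationary by shift invariance of the $U_t$, and ergodic as a Bernoulli functional of an iid sequence.

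The main obstacle will be this $L^1$-contraction step. A naive use of the 1-Lipschitz property together with the monotone-coupling identity $E|X_{t-1}^{(m)}-X_{t-1}^{(m')}|=E|\lambda_{t-1}^{(m)}-\lambda_{t-1}^{(m')}|$ only delivers the stronger condition $|\alpha_1|+|\beta_1|<1$, so closing the argument under the weaker hypotheses $\overline{\alpha_1}+\overline{\beta_1}<1$ and $|\beta_1|<1$ requires a more delicate route---either a Foster--Lyapunov/Meyn--Tweedie analysis that exploits $\phi$-irreducibility of the NB innovations (which place positive mass on every non-negative integer) to conclude geometric ergodicity of the chain $\{\lambda_t\}$, or a weighted $L^1$ norm tailored to the asymmetry between the positive and negative parts of $\alpha_1,\beta_1$. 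Selecting and verifying the correct route, in parallel with the Poisson treatment of \cite{weiss2022softplus}, is the technical heart of the proof.
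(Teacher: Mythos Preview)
Your first-moment bound via the drift inequality matches the paper's argument exactly, so that part is fine. The problem is the second half: you have correctly diagnosed that a direct $L^1$ coupling through the inverse-CDF construction only contracts under $|\alpha_1|+|\beta_1|<1$, and you then leave the actual closure of the argument under the weaker hypotheses as an unresolved ``technical heart.'' The two escape routes you sketch (Meyn--Tweedie irreducibility, or an ad hoc weighted $L^1$ norm) are not developed, and neither is the route the paper actually takes.

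The paper bypasses this obstacle entirely by invoking the abstract framework of \cite{doukhan2019absolute}, which requires only three ingredients: a drift condition ($\mathcal{A}1$), a Lipschitz condition on the recursion map \emph{in the $\lambda$-arguments alone with $x$ held fixed} ($\mathcal{A}2$), and a total-variation bound on the conditional laws ($\mathcal{A}3$). Condition $\mathcal{A}1$ is exactly your drift inequality and uses $\overline{\alpha_1}+\overline{\beta_1}<1$. Condition $\mathcal{A}2$ is where $|\beta_1|<1$ enters: since $sp$ is 1-Lipschitz, $|sp(\alpha_0+\alpha_1 x+\beta_1 l)-sp(\alpha_0+\alpha_1 x+\beta_1 l')|\le|\beta_1|\,|l-l'|$, and crucially this bound does \emph{not} involve $\alpha_1$ because $x$ is frozen. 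Condition $\mathcal{A}3$ is the only place the NB distribution matters, and the paper obtains it from Lemma~7 of \cite{gorgi2020beta}: writing NB as a Poisson--gamma mixture and using the Poisson TV bound of Adell--Lekuona yields $\mathrm{TV}(\mathrm{NB}_{p_1},\mathrm{NB}_{p_2})\le 1-\exp(-n|\tfrac{1-p_1}{p_1}-\tfrac{1-p_2}{p_2}|)=1-\exp(-|\lambda_1-\lambda_2|)$. Once $\mathcal{A}1$--$\mathcal{A}3$ are in hand, the Doukhan--Neumann coupling machinery delivers the stationary ergodic solution directly; no $L^1$ contraction in your sense is needed, and the asymmetry between $|\alpha_1|$ and $\overline{\alpha_1}$ is absorbed into the separation between the drift and the TV/Lipschitz conditions.
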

\begin{proof}
	We prove the existence of a stationary ergodic solution by verifying the three assumptions $\mathcal{A}1$ to $\mathcal{A}3$ in \cite{doukhan2019absolute} (See \textcolor{blue}{Appendix} \ref{App2}). It can be seen from (\ref{ineq}) that 
	\begin{equation}
		sp(x) \leq c\;ln(2) + max\{0,x\} \leq c\;ln(2) + \overline{\alpha_0} + \overline{\alpha_1} X_{t-1} + \overline{\beta_1}\lambda_{t-1}, \nonumber
	\end{equation}
	where $\overline{\alpha_0} = max\{0,\alpha_0\}$. Let $b_0 =1$, $a_0=c\;ln(2) + \overline{\alpha_0}$ and $\overline{\alpha_1} + \overline{\beta_1} \leq k <1$. Then, for $Y_t = \{X_{t-1},\ldots,X_{t-p+1},\lambda_t,\ldots,\lambda_{t-q+1}\}$ which reduces to $Y_t = \lambda_t$ for $p=q=1$, we have
	\begin{equation}
		E[Var[Y_t\mid Y_{t-1}]] \leq b_0(c\;ln(2) + \overline{\alpha_0} + (\overline{\alpha_1}+\overline{\beta_1})\lambda_{t-1}) \leq a_0 + k Var[Y_{t-1}], \nonumber
	\end{equation}
	where $Var[Y_t] = b_0\lambda_t$, so the condition $\mathcal{A}1$ is satisfied. Since $sp(x)$ satisfies Lipschitz condition, for any $x_1 \in \mathbb{R}$, $l_1, l_1' \geq 0$, we can get
	\begin{equation}
		\left| sp(\alpha_0 + \alpha_1x_1 + \beta_1l_1)- sp(\alpha_0 + \alpha_1x_1 + \beta_1l_1')\right| \leq |\beta_1||l_1 - l_1'|, \nonumber
	\end{equation}
	and hence condition $\mathcal{A}2$ is satisfied. Finally, by a part of the proof of Lemma 7 in  \cite{gorgi2020beta} (See \textcolor{blue}{Appendix} \ref{App3}),
	we have
	\begin{equation}
		\label{a3org}
		TV[NB_{p_1}, NB_{p_2}] \leq 1 - exp\left(-n\left| \frac{1-p_1}{p_1} - \frac{1-p_2}{p_2}\right|\right), 
	\end{equation}
	where  $NB_p$ denotes the probability measure of a negative binomial distribution with parameters $n$ and $p$, $TV[.]$ stands for total variation distance between $NB_{p_1}$ and $NB_{p_2}$ and so condition $\mathcal{A}3$ is established. Thus, stationary ergodic solutions exist. Further, since $sp(x) \leq c\; ln(2) + \overline{\alpha_0}+\overline{\alpha_1}X_{t-1}+\overline{\beta_1}\lambda_{t-1}$, we can get
	\begin{equation}
		\begin{aligned}
			E_{t-2}[\lambda_t] &\leq c\; ln(2) + \overline{\alpha_0} + \overline{\alpha_1}E_{t-2}[X_{t-1}]+ \overline{\beta_1}\lambda_{t-1}\\
			& \quad =\bar{\mathcal{C}}_{10} + \bar{\mathcal{C}}_{11}\lambda_{t-1}, \nonumber
		\end{aligned}
	\end{equation}
	where $E_t[.] = E[.| \mathscr{F}_t]$, $\bar{\mathcal{C}}_{10} = c\;ln(2) + \overline{\alpha_0}$ and $\bar{\mathcal{C}}_{11} = \overline{\alpha_1}+\overline{\beta_1}$.
	From the conditional expectation property $E[\lambda_t\mid \mathscr{F}_{t-3}] = E[E[\lambda_t\mid\mathscr{F}_{t-2}]\mid\mathscr{F}_{t-3}]$, by recursion we obtain
	\begin{equation}
		\label{cdrec}
		E_{t-h}[\lambda_t] \leq \sum_{i=0}^{h-2}\bar{\mathcal{C}}_{11}^{i}\bar{\mathcal{C}}_{10} + \bar{\mathcal{C}}_{11}^{h-1}\lambda_{t-h+1}. 
	\end{equation}
	Since $\bar{\mathcal{C}}_{11} < 1$, as $h \to \infty$, $E[\lambda_t] \leq (1-\bar{\mathcal{C}}_{11})\bar{\mathcal{C}}_{10}$, which implies $E[\lambda_t] < \infty$. Thus, by (\ref{exp}), $E[X_t] < \infty$.
	Hence, the proof. 
\end{proof}
\begin{thm}
	\label{thm2}
	Consider the softplus NB-INGARCH(1,1) model. If 
	\begin{equation}
		\overline{\alpha_1} + \overline{\beta_1} <1, \; |\beta_1| <1,\; \text{and} \; \left(1+\frac{1}{n}\right) \overline{\alpha_1}^2 + 2\overline{\alpha_1}\overline{\beta_1} + \overline{\beta_1}^2 <1, \nonumber
	\end{equation}
	then $E[X_t^2] < \infty$.
\end{thm}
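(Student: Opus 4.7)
The plan is to reduce the second moment of $X_t$ to the second moment of $\lambda_t$, and then exploit the linear upper bound on the softplus function together with the negative-binomial conditional variance formula to produce a one-step recursion for $E[\lambda_t^2]$ whose contraction constant is exactly the quantity appearing in the third hypothesis.

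First I would use the law of total variance together with (\ref{cdvar}) to write
\begin{equation}
E[X_t^2] \;=\; E\!\left[\lambda_t\bigl(1+\tfrac{\lambda_t}{n}\bigr)\right] + E[\lambda_t^2] \;=\; E[\lambda_t] \;+\; \Bigl(1+\tfrac{1}{n}\Bigr)\,E[\lambda_t^2]. \nonumber
\end{equation}
Theorem \ref{thm1} already provides $E[\lambda_t]<\infty$, so the entire argument reduces to establishing $E[\lambda_t^2]<\infty$ under the stated conditions.

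Next I would invoke the inequality (\ref{ineq}) to bound $\lambda_t \le c\ln(2) + \overline{\alpha_0} + \overline{\alpha_1}X_{t-1} + \overline{\beta_1}\lambda_{t-1}$, square both sides, and take conditional expectation given $\mathscr{F}_{t-2}$. Using $E_{t-2}[X_{t-1}] = \lambda_{t-1}$ and $E_{t-2}[X_{t-1}^2] = \lambda_{t-1} + (1+1/n)\lambda_{t-1}^2$ (from (\ref{cdvar})), the cross-term $2\overline{\alpha_1}\overline{\beta_1}X_{t-1}\lambda_{t-1}$ contributes $2\overline{\alpha_1}\overline{\beta_1}\lambda_{t-1}^2$, so that after collecting terms one obtains a bound of the form
\begin{equation}
E_{t-2}[\lambda_t^2] \;\le\; \bar{\mathcal{D}}_0 \;+\; \bar{\mathcal{D}}_1\,\lambda_{t-1} \;+\; \bar{\mathcal{D}}_2\,\lambda_{t-1}^2, \nonumber
\end{equation}
where $\bar{\mathcal{D}}_0,\bar{\mathcal{D}}_1$ are finite constants depending on $c,\alpha_0,\overline{\alpha_1},\overline{\beta_1},n$, and crucially
\begin{equation}
\bar{\mathcal{D}}_2 \;=\; \Bigl(1+\tfrac{1}{n}\Bigr)\overline{\alpha_1}^{\,2} + 2\overline{\alpha_1}\overline{\beta_1} + \overline{\beta_1}^{\,2}, \nonumber
\end{equation}
which the hypothesis places strictly below $1$.

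Finally, I would iterate this bound by the tower property in exactly the same way as in (\ref{cdrec}), obtaining $E_{t-h}[\lambda_t^2]\le \sum_{i=0}^{h-2}\bar{\mathcal{D}}_2^{\,i}\bigl(\bar{\mathcal{D}}_0 + \bar{\mathcal{D}}_1\,E_{t-h}[\lambda_{t-i-1}]\bigr) + \bar{\mathcal{D}}_2^{\,h-1}\lambda_{t-h+1}^2$. Since $\bar{\mathcal{D}}_2<1$, letting $h\to\infty$ and using the stationarity established in Theorem \ref{thm1} together with $E[\lambda_t]<\infty$ forces $E[\lambda_t^2]\le (1-\bar{\mathcal{D}}_2)^{-1}(\bar{\mathcal{D}}_0 + \bar{\mathcal{D}}_1 E[\lambda_t])<\infty$, and hence $E[X_t^2]<\infty$. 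The main delicate step is the bookkeeping in the squaring argument; the reason the third hypothesis appears in precisely the stated form is that the contribution $(1+1/n)\overline{\alpha_1}^{\,2}$ from $E_{t-2}[X_{t-1}^2]$ is what sharpens the naive bound $(\overline{\alpha_1}+\overline{\beta_1})^2$ one would get from the linearized recursion alone, so the care lies in tracking that extra $\overline{\alpha_1}^{\,2}/n$ term correctly.
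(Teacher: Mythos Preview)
Your proposal is correct and follows essentially the same approach as the paper: bound $\lambda_t$ linearly via (\ref{ineq}), square, apply $E_{t-2}[X_{t-1}^2]=\lambda_{t-1}+(1+1/n)\lambda_{t-1}^2$ to identify the contraction coefficient $\bar{\mathcal{D}}_2=(1+1/n)\overline{\alpha_1}^{2}+2\overline{\alpha_1}\overline{\beta_1}+\overline{\beta_1}^{2}$, and iterate. The only cosmetic difference is that the paper packages $(\lambda_t,\lambda_t^2)$ into a vector and iterates a $2\times 2$ lower-triangular matrix inequality, whereas you iterate the scalar bound for $\lambda_t^2$ directly while invoking the first-moment result from Theorem~\ref{thm1} along the way; both reach $E[\lambda_t^2]<\infty$ by the same mechanism.
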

\begin{proof}
	As a consequence of $\lambda_t \leq c\;ln(2) + \overline{\alpha_0} + \overline{\alpha_1}X_{t-1} + \overline{\beta_1}\lambda_{t-1}$, we have
	\begin{equation}
		\begin{aligned}
			\lambda_t^2  &\leq (c\;ln(2) + \overline{\alpha_0})^2 + \overline{\alpha_1}^2X_{t-1}^2 + \overline{\beta_1}^2\lambda_{t-1}^2 + 2 \overline{\alpha_1}\overline{\beta_1}X_{t-1}\lambda_{t-1}\\
			&\quad + 2(c\;ln(2) +\overline{\alpha_0})(\overline{\alpha_1}X_{t-1}+ \overline{\beta_1}\lambda_{t-1}). \nonumber
		\end{aligned}
	\end{equation}
	From (\ref{cdvar}), we obtain $	E_{t-1}[X_t^2]  = \lambda_t + (1+1/n) \lambda_t^2$. Therefore,
	\begin{equation}
		\begin{aligned}
			E_{t-2}[\lambda_t^2]  &\leq  (c\;ln(2) + \overline{\alpha_0})^2 + (\overline{\alpha_1}^2 + 2(c\;ln(2) + \overline{\alpha_0})(\overline{\alpha_1}+\overline{\beta_1}))\lambda_{t-1}\\
			&\quad + \left(\overline{\alpha_1}^2\left(1+\frac{1}{n}\right) + 2\overline{\alpha_1}\overline{\beta_1}+\overline{\beta_1}^2\right)\lambda_{t-1}^2 \nonumber\\
			& \quad =\bar{\mathcal{C}}_{20} + \bar{\mathcal{C}}_{21}\lambda_{t-1}+\bar{\mathcal{C}}_{22}\lambda_{t-1}^2, 
		\end{aligned}
	\end{equation}
	where $\bar{\mathcal{C}}_{20} = (c\;ln(2)+\overline{\alpha_0})^2$, $\bar{\mathcal{C}}_{21} = \overline{\alpha_1}^2 + 2(\overline{\alpha_1}+\overline{\beta_1})(c\;ln(2)+\overline{\alpha_0})$ and $\mathcal{C}_{22} = \left(1+\frac{1}{n}\right)\overline{\alpha_1}^2 + 2\overline{\alpha_1}\overline{\beta_1}+\overline{\beta_1}^2$. Hence, we can write
	\begin{equation}
		E_{t-2}[M_t] \leq \mathcal{D}_0 + \mathcal{D}_1M_{t-1}, \nonumber
	\end{equation}
	where $M_t = \left( \begin{array}{c}
		\lambda_t \\
		\lambda_t^2
	\end{array} \right)
	$, $\mathcal{D}_0 = \left(\begin{array}{c}
		\bar{\mathcal{C}}_{10} \\
		\bar{\mathcal{C}}_{20} 
	\end{array} \right)
	$, and $\mathcal{D}_1 = \left(\begin{array}{c c}
		\bar{\mathcal{C}}_{11}  & 0\\
		\bar{\mathcal{C}}_{21} & \bar{\mathcal{C}}_{22}
	\end{array} \right)
	$. By (\ref{cdrec}), we get
	\begin{equation}
		E_{t-h}[M_t] \leq \sum_{i=0}^{h-2} \mathcal{D}_1^i\mathcal{D}_1^{h-1}M_{t-h+1}, \nonumber
	\end{equation}
	where `$\leq$' is satisfied  by the corresponding components of the vectors on the left and right hand side of the inequality. If the assumptions stated in the theorem are valid, it directly implies that the diagonal elements of the lower triangular matrix $\mathcal{D}_1$ are $\mathcal{C}_{11} <1$, and $\mathcal{C}_{22} < 1$. As $h \to \infty$, we get
	\begin{equation}
		E[M_t] \leq (I - \mathcal{D}_1)^{-1}\mathcal{D}_0, \nonumber
	\end{equation}
	where $I$ is the $2 \times 2$ unit matrix. Thus, $E[\lambda_t^2] < \infty$, and so $E[X_t^2] < \infty$. The proof is now complete.
\end{proof}
\subsection{Estimating moments through Approximation}\label{approx}
While \textcolor{blue}{Theorems} \ref{thm1} and \ref{thm2} establish the existence of moments for the softplus INGARCH model, deriving exact closed-form expressions for the moments of  the model presents significant challenges. However, because the softplus function closely resembles the piecewise linear ReLU function, the moment properties of the softplus INGARCH model can be approximated using the moment formulae of the linear INGARCH model \citep{weiss2022softplus}. Leveraging the moment formulae from NB-INGARCH model as approximations allows us to gain insights into the mean, variance, and autocorrelation function (ACF) of the sp NB-INGARCH model. \par
The linear NB-INGARCH(p,q) model is defined as:
\begin{equation}
	\label{nbing}
	X_t \mid \mathscr{F}_{t-1} \sim  NB(n,p_t),  \quad
	\lambda_t=\frac{n(1-p_t)}{p_t} =\mathfrak{a}_0+\sum_{i=1}^p \mathfrak{a}_i X_{t-i}+\sum_{j=1}^{q} \mathfrak{b}_j \lambda_{t-j}, 
\end{equation}
where $\mathfrak{a}_0 > 0$, $\mathfrak{a}_i \geq 0$, $\mathfrak{b}_j \geq 0$, for $i=1,\ldots p$, $j = 1, \ldots q$, $p \geq 1$, $q \geq 1$. $\mathscr{F}_{t-1}$ reprises the definition from (\ref{sppois}). The model (\ref{nbing}) slightly differs from the one proposed by \cite{zhu2011negative} in that the latter considers $\lambda_t = \frac{1-p_t}{p_t}$.  Incorporating this minor alteration to the theorems of stationarity and recurrence relation satisfied by autocovariance of $\{X_t\}$ (denoted as $\gamma_X(.)$) and $\{\lambda_t\}$ (denoted as $\gamma_\lambda(.)$) which are proved in \cite{zhu2011negative}, we can easily obtain the following results.

\begin{res}
	\label{thm3}
	If the stochastic process \( \{X_t\}_{t \in \mathbb{Z}} \) of the NB-INGARCH(\(p, q\)) model is stationary, then its mean is:
	\[
	\mu = \frac{\mathfrak{a}_0}{1 - \sum_{i=1}^{p} \mathfrak{a}_i - \sum_{j=1}^{q} \mathfrak{b}_j}.
	\]
\end{res}
\begin{res}
	\label{thm4}
	Suppose that the process $\{X_t\}$ following NB-INARCH(p) model is first - order stationary. Then, a necessary and sufficient condition for the process to be second - order stationary is that 
	\begin{equation}
		1- A_1b^{-1} - \ldots - A_p b^{-p} = 0 \nonumber
	\end{equation}
	has all roots lying inside the unit circle, where for $r,s= 1, \ldots, p-1$, assuming $\omega_0= 1+\frac{1}{n}$,
	$A_r =\omega_0\left(\mathfrak{a}_r^2 - \sum_{v=1}^{p-1}\sum_{|i-j|=v} \mathfrak{a}_i\mathfrak{a}_jb_{vr}\nu_{r0}\right)$, $A_p =  \omega_0\mathfrak{a}_p^2$, $\nu_{s0} = \mathfrak{a}_s$, $\nu_{ss} = \sum_{|i-s|=s} \mathfrak{a}_i -1$ and $\nu_{sr} = \sum_{|i-s|=r}\mathfrak{a}_i, r \neq s$. 
\end{res} 
\begin{res}
	\label{thm5}
	Suppose that $\{X_t\}$ following NB-INGARCH(p,q) process is second - order stationary. Then, the autocovariance functions corresponding to $\{X_t\}$ and $\{\lambda_t\}$ respectively satisfy the equations
	\smaller
	\begin{equation}
		\begin{aligned}
			\gamma_X(h) &= \sum_{i=1}^p\mathfrak{a}_i\gamma_X(|h-i|) + \sum_{j=1}^{min(h-1,q)}\mathfrak{b}_j\gamma_X(h-j) +  \sum_{j=h}^q\mathfrak{b}_j\gamma_{\lambda}(j-h), \quad h \geq 1; \nonumber \\
			\gamma_{\lambda}(h) &=  \sum_{i=1}^{min(h,p)}\mathfrak{a}_i\gamma_{\lambda}(|h-i|) +  \sum_{i=h+1}^p\mathfrak{a}_i\gamma_X(i-h) + \sum_{j=1}^q \mathfrak{b}_j \gamma_{\lambda}(|h-j|), \quad h \geq 0.
		\end{aligned}
	\end{equation}
\end{res} 
The proofs of the above results can be arrived at by applying the reparametrization in \cite{zhu2011negative}. Alternatively, one can also refer to the proofs in  \cite{bayesian}. 
\begin{rem}
	\label{remo}
	From \textcolor{blue}{Results} \ref{thm3} to \ref{thm5}, we can readily obtain the following equations:
	
	\begin{align}
		\label{eqs}
		\gamma_X(0) &= \omega_0\gamma_\lambda(0) +  \mu\left(1+ \frac{\mu}{n}\right) , \quad \gamma_\lambda(0) = \sum_{i=1}^p \mathfrak{a}_i \gamma_\lambda(i) + \sum_{j=1}^q \mathfrak{b}_j \gamma_\lambda(j),  \\
		\label{eqs1}
		\gamma_X(h) &= \sum_{i=1}^{\min(h,p)} \mathfrak{a}_i \gamma_X(h - i) + \sum_{j=h}^q \mathfrak{b}_j \gamma_X(h - j) + \gamma_\lambda(h), \quad h \geq 1, \\
		\label{eqs2}
		\gamma_\lambda(h) &= \sum_{i=1}^p \mathfrak{a}_i \gamma_\lambda(h - i) + \sum_{j=h}^q \mathfrak{b}_j \gamma_\lambda(h - j), \quad h \geq 1,
	\end{align}
	
	where $\omega_0= 1+\frac{1}{n}$,  $\gamma_X(h) = \text{Cov}(X_t, X_{t-h})$, and $\gamma_\lambda(h) = \text{Cov}(\lambda_t, \lambda_{t-h})$.
\end{rem}
In the following example, the statistical properties of NB-INGARCH (1,1) model are stated based on the \textcolor{blue}{Results} \ref{thm3}, \ref{thm4}, \ref{thm5} and \textcolor{blue}{Remark} \ref{remo}.
\begin{ex}
	\label{exple}
	The mean ($\mu$), variance ($\gamma_X(0)$), and ACF ($\rho_X(h)$) of NB-INGARCH (1,1) respectively are:
	\begin{align}
		\mu = \frac{\mathfrak{a}_0}{1 - \mathfrak{a}_1 - \mathfrak{b}_1}, 
	\end{align}
	\begin{align}
		\gamma_X(0)= \mu\left(1+\frac{\mu}{n}\right)\left(\frac{1-2\mathfrak{a}_1\mathfrak{b}_1 - \mathfrak{b}_1^2}{1-\omega_0\mathfrak{a}_1^2 - 2\mathfrak{a}_1\mathfrak{b}_1 - \mathfrak{b}_1^2}\right),
	\end{align}
	and
	\begin{align}
		\rho_X(h) = \mathfrak{a}_1(\mathfrak{a}_1+\mathfrak{b}_1)^{h-1}\left(\frac{1-\mathfrak{a}_1\mathfrak{b}_1 - \mathfrak{b}_1^2}{1-2\mathfrak{a}_1\mathfrak{b}_1 - \mathfrak{b}_1^2}\right),\quad h \geq 1. 
	\end{align}
\end{ex}
To approximate the moments of sp NB - INGARCH, we can substitute the parameters of linear NB-INGARCH models with the parameters of former, i.e., substitute $\mathfrak{a}_0$,  $\mathfrak{a}_1$ and $\mathfrak{b}_1$, by $\alpha_0$, $\alpha_1$ and $\beta_1$ respectively. According to numerical studies conducted by \cite{weiss2022softplus} and \cite{zhusp}, such approximations are very close to the empirical moments of data generated from their respective softplus models for most cases. Nevertheless, to test the accuracy of the approximations for NB-INGARCH, we perform a simulation exercise with different parameter combinations for sp NB- INGARCH(1,1) model. We considered three values of the mean $\mu$, i.e., $\mu = 2, 6,$ and $12$, for selection of parameter combinations and a sample of size $10^6$ was generated for each configuration. We compared the true moment values with the approximate "linear" moment values obtained from the linear NB-INGARCH model given in  \textcolor{blue}{Example} \ref{exple}. The results of our study are summarized in \textcolor{blue}{Tables} \ref{tab1} and \ref{tab2}.\\ 
Note that for \textcolor{blue}{Table} \ref{tab2}, $c\to 0$ indicates the ReLU INGARCH model (See \citet[Remark~1]{weiss2022softplus}).

Observing results for Models $1$ and $13$, we notice that the true and approximate mean are quite different. This is because of a low intercept value $\alpha _0$ in the softplus function. To make the approximation better, it is recommended to use a smaller value of $c$. By using values like $c=0.5, c=0.25$, or even $c\to 0$, we can see an improvement in the linearity of the model. Similar deviations are observed in Models $12$, $18$ and $22$, especially in certain parameters like dispersion ratio and ACFs of lags $2$ and $3$. By reducing the value of c, we can make the approximation better. However, even with very small values of c, there are still some deviations because the functions used are not strictly linear. This anomaly could also be due to strong negative autocorrelation in Models 12 and 22.\par

In conclusion, the sp NB INGARCH model can approximate the mean, variance, and ACF well under certain conditions. By adjusting the value of $c$, we can improve the accuracy of the model. Overall, as long as the stationarity conditions are met and the autocorrelation, if negative, is not too strong, the softplus INGARCH model can provide a good approximation of the true values. The approximations also have the advantage of better representation of negative ACF values as compared to the linear NB-INGARCH model. Hence, the approximations can be very well used as initial values for conditional maximum likelihood estimation.

\subsubsection{Conditional maximum likelihood estimation}\label{est}

To estimate the parameters of the softplus NB-INGARCH model by the maximum likelihood method, we need to estimate the parameter vector $\boldsymbol{\Theta} = (\theta, n)^\top$, where the parameter space is $\boldsymbol{\Theta}$, and the initial parameter value is $\Theta_0 = (\theta_0, n_0)$. Here, $\theta = (\alpha_0, \alpha_1,\ldots \alpha_p, \beta_1, \ldots \beta_p)$ represents the parameter space, and $\{x_t\}_{t=1}^s$ denotes the observed values of $X_t$ from a sample of size $s$.

Based on the model definition (\ref{nb}) and (\ref{spnbrec}), we can easily obtain the log-likelihood function:
\begin{equation}
	\label{lik}
	\begin{aligned}
		l(\boldsymbol{\Theta}) &= n\sum_{t=1}^{s}log p_t + \sum_{t=1}^sx_tlog(1-p_t) + log \prod_{t=1}^{s} \binom{x_t + n - 1}{n - 1} \\
		&= \sum_{t=1}^s\Big\{x_t log\left(\frac{\lambda_t}{n}\right) - (n+x_t)\log\left(1+\frac{\lambda_t}{n}\right)+ \sum_{v=1}^{x_t}log(v+n-1) - log(x_t!)\Big\}.
	\end{aligned}
\end{equation}
An intial value of $\lambda_t$ is obtained by taking the average value of $\{x_t\}_{t=1}^s$ and consequently substituting in the softplus function. Maximizing the log-likelihood function (\ref{lik}) through numerical optimisation methods, we obtain the conditional maximum likelihood estimator (CMLE) $\boldsymbol{\hat{\Theta}}$ of $\boldsymbol{\Theta}$.

\begin{table}[H]
	\center
	\caption{Comparison of moments from  sp INGARCH(1,1) models with corresponding moments obtained using linear model formulae(``lin"), for $c=1$ and $c=0.5$.}
	\renewcommand{\arraystretch}{1.2}
	\scalebox{0.55}{%
		\begin{tabular}{cccccccccccccc}
			\multicolumn{1}{l|}{}      & \multicolumn{1}{l}{}           & \multicolumn{1}{l}{}           & \multicolumn{1}{l|}{}          & \multicolumn{2}{c|}{$\mu$}                           & \multicolumn{2}{c|}{$\sigma ^2 / \mu$} & \multicolumn{2}{c|}{$\rho (1)$}      & \multicolumn{2}{c|}{$\rho (2)$}      & \multicolumn{2}{c}{$\rho (3)$} \\
			\multicolumn{1}{l|}{Model} & \multicolumn{1}{c}{$\alpha_0$} & \multicolumn{1}{c}{$\alpha_1$} & \multicolumn{1}{c|}{$\beta_1$} & \multicolumn{1}{l}{sp} & \multicolumn{1}{c|}{lin}    & sp      & \multicolumn{1}{c|}{lin}     & sp     & \multicolumn{1}{c|}{lin}    & sp     & \multicolumn{1}{c|}{lin}    & sp             & lin           \\ \hline
			\multicolumn{14}{c}{$c=1$, $n=3$}                                                                                                                                                                                                                                                                                                            \\ \hline
			\multicolumn{1}{c|}{1}     & 0.6                            & 0.3                            & \multicolumn{1}{c|}{0.4}       & 2.403                  & \multicolumn{1}{c|}{2.002}  & 2.140   & \multicolumn{1}{c|}{2.079}   & 0.323  & \multicolumn{1}{c|}{0.360}  & 0.210  & \multicolumn{1}{c|}{0.252}  & 0.137          & 0.176         \\
			\multicolumn{1}{c|}{2}     & 1.8                            & 0.3                            & \multicolumn{1}{c|}{0.4}       & 6.008                  & \multicolumn{1}{c|}{6.000}  & 3.739   & \multicolumn{1}{c|}{3.750}   & 0.358  & \multicolumn{1}{c|}{0.360}  & 0.250  & \multicolumn{1}{c|}{0.252}  & 0.173          & 0.176         \\
			\multicolumn{1}{c|}{3}     & 3.6                            & 0.3                            & \multicolumn{1}{c|}{0.4}       & 12.010                 & \multicolumn{1}{c|}{12.000} & 2.395   & \multicolumn{1}{c|}{2.388}   & 0.362  & \multicolumn{1}{c|}{0.360}  & 0.253  & \multicolumn{1}{c|}{0.252}  & 0.176          & 0.176         \\ \hline
			\multicolumn{1}{c|}{4}     & 2.2                            & 0.3                            & \multicolumn{1}{c|}{-0.4}      & 2.131                  & \multicolumn{1}{c|}{1.907}  & 1.873   & \multicolumn{1}{c|}{1.847}   & 0.243  & \multicolumn{1}{c|}{0.312}  & -0.020 & \multicolumn{1}{c|}{-0.098} & 0.003          & 0.030         \\
			\multicolumn{1}{c|}{5}     & 6.6                            & 0.3                            & \multicolumn{1}{c|}{-0.4}      & 6.004                  & \multicolumn{1}{c|}{6.132}  & 3.385   & \multicolumn{1}{c|}{2.946}   & 0.266  & \multicolumn{1}{c|}{0.325}  & -0.027 & \multicolumn{1}{c|}{-0.106} & 0.002          & 0.034         \\
			\multicolumn{1}{c|}{6}     & 13.2                           & 0.3                            & \multicolumn{1}{c|}{-0.4}      & 12.001                 & \multicolumn{1}{c|}{11.948} & 2.196   & \multicolumn{1}{c|}{2.363}   & 0.268  & \multicolumn{1}{c|}{0.153}  & -0.025 & \multicolumn{1}{c|}{-0.023} & 0.002          & 0.004         \\ \hline
			\multicolumn{1}{c|}{7}     & 1.8                            & -0.3                           & \multicolumn{1}{c|}{0.4}       & 2.163                  & \multicolumn{1}{c|}{2.145}  & 1.863   & \multicolumn{1}{c|}{1.604}   & -0.226 & \multicolumn{1}{c|}{-0.200} & -0.022 & \multicolumn{1}{c|}{0.020}  & -0.002         & 0.000         \\
			\multicolumn{1}{c|}{8}     & 5.4                            & -0.3                           & \multicolumn{1}{c|}{0.4}       & 6.010                  & \multicolumn{1}{c|}{5.906}  & 3.368   & \multicolumn{1}{c|}{3.562}   & -0.262 & \multicolumn{1}{c|}{-0.200} & -0.027 & \multicolumn{1}{c|}{0.020}  & -0.002         & 0.000         \\
			\multicolumn{1}{c|}{9}     & 10.8                           & -0.3                           & \multicolumn{1}{c|}{0.4}       & 11.998                 & \multicolumn{1}{c|}{12.009} & 2.200   & \multicolumn{1}{c|}{2.195}   & -0.267 & \multicolumn{1}{c|}{-0.200} & -0.024 & \multicolumn{1}{c|}{0.020}  & -0.005         & 0.000         \\ \hline
			\multicolumn{1}{c|}{10}    & 3.4                            & -0.3                           & \multicolumn{1}{c|}{-0.4}      & 2.097                  & \multicolumn{1}{c|}{2.811}  & 1.897   & \multicolumn{1}{c|}{2.250}   & -0.262 & \multicolumn{1}{c|}{-0.300} & 0.158  & \multicolumn{1}{c|}{0.200}  & -0.093         & 0.000         \\
			\multicolumn{1}{c|}{11}    & 10.2                           & -0.3                           & \multicolumn{1}{c|}{-0.4}      & 6.016                  & \multicolumn{1}{c|}{7.174}  & 3.662   & \multicolumn{1}{c|}{4.287}   & -0.340 & \multicolumn{1}{c|}{-0.300} & 0.234  & \multicolumn{1}{c|}{0.200}  & -0.159         & 0.000         \\
			\multicolumn{1}{c|}{12}    & 20.4                           & -0.3                           & \multicolumn{1}{c|}{-0.4}      & 11.997                 & \multicolumn{1}{c|}{11.153} & 2.381   & \multicolumn{1}{c|}{6.598}   & -0.359 & \multicolumn{1}{c|}{-0.300} & 0.251  & \multicolumn{1}{c|}{0.200}  & -0.175         & 0.000         \\ \hline
			\multicolumn{1}{c|}{13}    & 0.6                            & 0.4                            & \multicolumn{1}{c|}{0.3}       & 2.447                  & \multicolumn{1}{c|}{2.002}  & 2.480   & \multicolumn{1}{c|}{2.446}   & 0.433  & \multicolumn{1}{c|}{0.471}  & 0.284  & \multicolumn{1}{c|}{0.330}  & 0.188          & 0.231         \\
			\multicolumn{1}{c|}{14}    & 1.8                            & 0.4                            & \multicolumn{1}{c|}{0.3}       & 6.026                  & \multicolumn{1}{c|}{6.001}  & 4.410   & \multicolumn{1}{c|}{4.403}   & 0.469  & \multicolumn{1}{c|}{0.472}  & 0.328  & \multicolumn{1}{c|}{0.330}  & 0.229          & 0.231         \\
			\multicolumn{1}{c|}{15}    & 3.6                            & 0.4                            & \multicolumn{1}{c|}{0.3}       & 12.012                 & \multicolumn{1}{c|}{12.017} & 2.696   & \multicolumn{1}{c|}{2.699}   & 0.471  & \multicolumn{1}{c|}{0.472}  & 0.331  & \multicolumn{1}{c|}{0.330}  & 0.232          & 0.231         \\ \hline
			\multicolumn{1}{c|}{16}    & 1.8                            & 0.4                            & \multicolumn{1}{c|}{-0.3}      & 2.166                  & \multicolumn{1}{c|}{2.302}  & 2.034   & \multicolumn{1}{c|}{1.984}   & 0.331  & \multicolumn{1}{c|}{0.215}  & 0.033  & \multicolumn{1}{c|}{0.070}  & 0.005          & 0.023         \\
			\multicolumn{1}{c|}{17}    & 5.4                            & 0.4                            & \multicolumn{1}{c|}{-0.3}      & 6.011                  & \multicolumn{1}{c|}{6.573}  & 3.691   & \multicolumn{1}{c|}{4.625}   & 0.358  & \multicolumn{1}{c|}{0.432}  & 0.037  & \multicolumn{1}{c|}{0.190}  & 0.003          & 0.084         \\
			\multicolumn{1}{c|}{18}    & 10.8                           & 0.4                            & \multicolumn{1}{c|}{-0.3}      & 12.005                 & \multicolumn{1}{c|}{12.969} & 2.349   & \multicolumn{1}{c|}{2.085}   & 0.358  & \multicolumn{1}{c|}{0.334}  & 0.035  & \multicolumn{1}{c|}{0.112}  & 0.004          & 0.037         \\ \hline
			\multicolumn{1}{c|}{19}    & 2.2                            & -0.4                           & \multicolumn{1}{c|}{0.3}       & 2.156                  & \multicolumn{1}{c|}{1.903}  & 1.954   & \multicolumn{1}{c|}{1.703}   & -0.289 & \multicolumn{1}{c|}{-0.300} & 0.018  & \multicolumn{1}{c|}{0.030}  & -0.001         & 0.000         \\
			\multicolumn{1}{c|}{20}    & 6.6                            & -0.4                           & \multicolumn{1}{c|}{0.3}       & 6.026                  & \multicolumn{1}{c|}{5.819}  & 3.625   & \multicolumn{1}{c|}{3.399}   & -0.342 & \multicolumn{1}{c|}{-0.300} & 0.031  & \multicolumn{1}{c|}{0.030}  & -0.002         & 0.000         \\
			\multicolumn{1}{c|}{21}    & 13.2                           & -0.4                           & \multicolumn{1}{c|}{0.3}       & 11.997                 & \multicolumn{1}{c|}{11.645} & 2.354   & \multicolumn{1}{c|}{2.550}   & -0.357 & \multicolumn{1}{c|}{-0.300} & 0.036  & \multicolumn{1}{c|}{0.030}  & -0.005         & 0.000         \\ \hline
			\multicolumn{1}{c|}{22}    & 3.4                            & -0.4                           & \multicolumn{1}{c|}{-0.3}      & 2.113                  & \multicolumn{1}{c|}{1.362}  & 2.015   & \multicolumn{1}{c|}{1.829}   & -0.325 & \multicolumn{1}{c|}{-0.304} & 0.189  & \multicolumn{1}{c|}{0.200}  & -0.105         & 0.000         \\
			\multicolumn{1}{c|}{23}    & 10.2                           & -0.4                           & \multicolumn{1}{c|}{-0.3}      & 6.035                  & \multicolumn{1}{c|}{6.610}  & 4.051   & \multicolumn{1}{c|}{3.982}   & -0.418 & \multicolumn{1}{c|}{-0.400} & 0.283  & \multicolumn{1}{c|}{0.300}  & -0.186         & -0.200        \\
			\multicolumn{1}{c|}{24}    & 20.4                           & -0.4                           & \multicolumn{1}{c|}{-0.3}      & 12.669                 & \multicolumn{1}{c|}{12.429} & 3.396   & \multicolumn{1}{c|}{4.297}   & 0.358  & \multicolumn{1}{c|}{0.312}  & 0.035  & \multicolumn{1}{c|}{0.097}  & 0.003          & 0.030         \\ \hline
			\multicolumn{14}{c}{$c = 0.5$, $n=3$}                                                                                                                                                                                                                                                                                                        \\ \hline
			\multicolumn{1}{c|}{1}     & 0.6                            & 0.3                            & \multicolumn{1}{c|}{0.4}       & 2.054                  & \multicolumn{1}{c|}{1.998}  & 2.069   & \multicolumn{1}{c|}{2.078}   & 0.349  & \multicolumn{1}{c|}{0.360}  & 0.241  & \multicolumn{1}{c|}{0.252}  & 0.166          & 0.176         \\
			\multicolumn{1}{c|}{2}     & 1.8                            & 0.3                            & \multicolumn{1}{c|}{0.4}       & 6.016                  & \multicolumn{1}{c|}{6.000}  & 3.771   & \multicolumn{1}{c|}{3.750}   & 0.361  & \multicolumn{1}{c|}{0.360}  & 0.254  & \multicolumn{1}{c|}{0.252}  & 0.176          & 0.176         \\
			\multicolumn{1}{c|}{3}     & 3.6                            & 0.3                            & \multicolumn{1}{c|}{0.4}       & 11.999                 & \multicolumn{1}{c|}{12.000} & 2.390   & \multicolumn{1}{c|}{2.388}   & 0.360  & \multicolumn{1}{c|}{0.360}  & 0.253  & \multicolumn{1}{c|}{0.252}  & 0.176          & 0.176         \\ \hline
			\multicolumn{1}{c|}{4}     & 2.2                            & 0.3                            & \multicolumn{1}{c|}{-0.4}      & 2.012                  & \multicolumn{1}{c|}{2.417}  & 1.877   & \multicolumn{1}{c|}{2.683}   & 0.265  & \multicolumn{1}{c|}{0.153}  & -0.023 & \multicolumn{1}{c|}{-0.023} & 0.002          & 0.004         \\
			\multicolumn{1}{c|}{5}     & 6.6                            & 0.3                            & \multicolumn{1}{c|}{-0.4}      & 6.005                  & \multicolumn{1}{c|}{5.132}  & 3.382   & \multicolumn{1}{c|}{2.828}   & 0.267  & \multicolumn{1}{c|}{0.325}  & -0.028 & \multicolumn{1}{c|}{-0.106} & 0.003          & 0.034         \\
			\multicolumn{1}{c|}{6}     & 13.2                           & 0.3                            & \multicolumn{1}{c|}{-0.4}      & 12.008                 & \multicolumn{1}{c|}{11.948} & 2.196   & \multicolumn{1}{c|}{2.363}   & 0.267  & \multicolumn{1}{c|}{0.153}  & -0.026 & \multicolumn{1}{c|}{-0.023} & 0.003          & 0.004         \\ \hline
			\multicolumn{1}{c|}{7}     & 1.8                            & -0.3                           & \multicolumn{1}{c|}{0.4}       & 2.022                  & \multicolumn{1}{c|}{1.785}  & 1.853   & \multicolumn{1}{c|}{1.581}   & -0.250 & \multicolumn{1}{c|}{-0.200} & -0.026 & \multicolumn{1}{c|}{0.000}  & -0.002         & 0.000         \\
			\multicolumn{1}{c|}{8}     & 5.4                            & -0.3                           & \multicolumn{1}{c|}{0.4}       & 6.000                  & \multicolumn{1}{c|}{5.599}  & 3.374   & \multicolumn{1}{c|}{1.751}   & -0.265 & \multicolumn{1}{c|}{-0.200} & -0.026 & \multicolumn{1}{c|}{0.000}  & -0.003         & 0.000         \\
			\multicolumn{1}{c|}{9}     & 10.8                           & -0.3                           & \multicolumn{1}{c|}{0.4}       & 11.999                 & \multicolumn{1}{c|}{11.216} & 2.201   & \multicolumn{1}{c|}{2.333}   & -0.266 & \multicolumn{1}{c|}{-0.200} & -0.027 & \multicolumn{1}{c|}{0.000}  & -0.003         & 0.000         \\ \hline
			\multicolumn{1}{c|}{10}    & 3.4                            & -0.3                           & \multicolumn{1}{c|}{-0.4}      & 2.019                  & \multicolumn{1}{c|}{2.181}  & 1.963   & \multicolumn{1}{c|}{2.133}   & -0.309 & \multicolumn{1}{c|}{-0.307} & 0.206  & \multicolumn{1}{c|}{0.201}  & -0.133         & 0.000         \\
			\multicolumn{1}{c|}{11}    & 10.2                           & -0.3                           & \multicolumn{1}{c|}{-0.4}      & 6.009                  & \multicolumn{1}{c|}{7.248}  & 3.705   & \multicolumn{1}{c|}{4.798}   & -0.347 & \multicolumn{1}{c|}{-0.300} & 0.241  & \multicolumn{1}{c|}{0.200}  & -0.167         & 0.000         \\
			\multicolumn{1}{c|}{12}    & 20.4                           & -0.3                           & \multicolumn{1}{c|}{-0.4}      & 11.994                 & \multicolumn{1}{c|}{12.153} & 2.394   & \multicolumn{1}{c|}{2.598}   & -0.361 & \multicolumn{1}{c|}{-0.300} & 0.252  & \multicolumn{1}{c|}{0.200}  & -0.176         & 0.000         \\ \hline
			\multicolumn{1}{c|}{13}    & 0.6                            & 0.4                            & \multicolumn{1}{c|}{0.3}       & 2.079                  & \multicolumn{1}{c|}{2.006}  & 2.427   & \multicolumn{1}{c|}{2.430}   & 0.462  & \multicolumn{1}{c|}{0.471}  & 0.320  & \multicolumn{1}{c|}{0.330}  & 0.219          & 0.231         \\
			\multicolumn{1}{c|}{14}    & 1.8                            & 0.4                            & \multicolumn{1}{c|}{0.3}       & 6.006                  & \multicolumn{1}{c|}{6.009}  & 2.772   & \multicolumn{1}{c|}{2.775}   & 0.472  & \multicolumn{1}{c|}{0.472}  & 0.329  & \multicolumn{1}{c|}{0.330}  & 0.230          & 0.231         \\
			\multicolumn{1}{c|}{15}    & 3.6                            & 0.4                            & \multicolumn{1}{c|}{0.3}       & 11.997                 & \multicolumn{1}{c|}{12.000} & 2.701   & \multicolumn{1}{c|}{2.698}   & 0.472  & \multicolumn{1}{c|}{0.472}  & 0.329  & \multicolumn{1}{c|}{0.330}  & 0.229          & 0.231         \\ \hline
			\multicolumn{1}{c|}{16}    & 1.8                            & 0.4                            & \multicolumn{1}{c|}{-0.3}      & 2.018                  & \multicolumn{1}{c|}{2.395}  & 2.040   & \multicolumn{1}{c|}{2.509}   & 0.352  & \multicolumn{1}{c|}{0.509}  & 0.036  & \multicolumn{1}{c|}{0.259}  & 0.003          & 0.132         \\
			\multicolumn{1}{c|}{17}    & 5.4                            & 0.4                            & \multicolumn{1}{c|}{-0.3}      & 5.998                  & \multicolumn{1}{c|}{6.002}  & 3.680   & \multicolumn{1}{c|}{3.766}   & 0.359  & \multicolumn{1}{c|}{0.372}  & 0.037  & \multicolumn{1}{c|}{0.139}  & 0.003          & 0.052         \\
			\multicolumn{1}{c|}{18}    & 10.8                           & 0.4                            & \multicolumn{1}{c|}{-0.3}      & 12.014                 & \multicolumn{1}{c|}{12.458} & 2.358   & \multicolumn{1}{c|}{2.360}   & 0.361  & \multicolumn{1}{c|}{0.305}  & 0.039  & \multicolumn{1}{c|}{0.093}  & 0.004          & 0.029         \\ \hline
			\multicolumn{1}{c|}{19}    & 2.2                            & -0.4                           & \multicolumn{1}{c|}{0.3}       & 2.034                  & \multicolumn{1}{c|}{1.923}  & 1.967   & \multicolumn{1}{c|}{1.809}   & -0.318 & \multicolumn{1}{c|}{-0.300} & 0.024  & \multicolumn{1}{c|}{0.000}  & -0.002         & 0.000         \\
			\multicolumn{1}{c|}{20}    & 6.6                            & -0.4                           & \multicolumn{1}{c|}{0.3}       & 6.014                  & \multicolumn{1}{c|}{6.804}  & 3.651   & \multicolumn{1}{c|}{2.328}   & -0.346 & \multicolumn{1}{c|}{-0.300} & 0.031  & \multicolumn{1}{c|}{0.000}  & -0.004         & 0.000         \\
			\multicolumn{1}{c|}{21}    & 13.2                           & -0.4                           & \multicolumn{1}{c|}{0.3}       & 12.004                 & \multicolumn{1}{c|}{11.879} & 2.356   & \multicolumn{1}{c|}{1.544}   & -0.358 & \multicolumn{1}{c|}{-0.317} & 0.034  & \multicolumn{1}{c|}{0.000}  & -0.004         & 0.000         \\ \hline
			\multicolumn{1}{c|}{22}    & 3.4                            & -0.4                           & \multicolumn{1}{c|}{-0.3}      & 2.018                  & \multicolumn{1}{c|}{1.903}  & 1.455   & \multicolumn{1}{c|}{1.449}   & -0.411 & \multicolumn{1}{c|}{-0.400} & 0.272  & \multicolumn{1}{c|}{0.300}  & -0.175         & -0.100        \\
			\multicolumn{1}{c|}{23}    & 10.2                           & -0.4                           & \multicolumn{1}{c|}{-0.3}      & 6.027                  & \multicolumn{1}{c|}{9.649}  & 4.142   & \multicolumn{1}{c|}{8.265}   & -0.429 & \multicolumn{1}{c|}{-0.400} & 0.294  & \multicolumn{1}{c|}{0.300}  & -0.195         & -0.100        \\
			\multicolumn{1}{c|}{24}    & 20.4                           & -0.4                           & \multicolumn{1}{c|}{-0.3}      & 11.994                 & \multicolumn{1}{c|}{12.153} & 2.693   & \multicolumn{1}{c|}{6.598}   & -0.471 & \multicolumn{1}{c|}{-0.400} & 0.328  & \multicolumn{1}{c|}{0.300}  & -0.228         & -0.200        \\ \hline
	\end{tabular}}
	\label{tab1}
\end{table}
 The asymptotic properties of $\boldsymbol{\hat{\Theta}}$ so obtained can be derived by an approach discussed in \cite{zhusp}, and is omitted to avoid repetition.
\begin{table}[H]
	\center
	\caption{Comparison of moments from  sp INGARCH(1,1) models with those obtained via linear model formulae("lin"): mean, dispersion ratio and ACF at lags 1 -3, for the cases $c=0.25$ and $c\to0$.}
	\renewcommand{\arraystretch}{1.2}
	\scalebox{0.6}{%
		\begin{tabular}{cccccccccccccc}
			\multicolumn{1}{l|}{}      & \multicolumn{1}{l}{}           & \multicolumn{1}{l}{}           & \multicolumn{1}{l|}{}          & \multicolumn{2}{c|}{$\mu$}           & \multicolumn{2}{c|}{$\sigma ^2 / \mu$} & \multicolumn{2}{c|}{$\rho (1)$}      & \multicolumn{2}{c|}{$\rho (2)$}      & \multicolumn{2}{c}{$\rho (3)$} \\
			\multicolumn{1}{l|}{Model} & \multicolumn{1}{c}{$\alpha_0$} & \multicolumn{1}{c}{$\alpha_1$} & \multicolumn{1}{c|}{$\beta_1$} & sp     & \multicolumn{1}{c|}{lin}    & sp      & \multicolumn{1}{c|}{lin}     & sp     & \multicolumn{1}{c|}{lin}    & sp     & \multicolumn{1}{c|}{lin}    & sp             & lin           \\ \hline
			\multicolumn{14}{c}{$c=0.25$, $n=3$}                                                                                                                                                                                                                                                                                         \\ \hline
			\multicolumn{1}{c|}{1}     & 0.6                            & 0.3                            & \multicolumn{1}{c|}{0.4}       & 1.932  & \multicolumn{1}{c|}{1.932}  & 1.985   & \multicolumn{1}{c|}{1.926}   & 0.314  & \multicolumn{1}{c|}{0.300}  & 0.226  & \multicolumn{1}{c|}{0.250}  & 0.147          & 0.170         \\
			\multicolumn{1}{c|}{2}     & 1.8                            & 0.3                            & \multicolumn{1}{c|}{0.4}       & 6.083  & \multicolumn{1}{c|}{6.084}  & 2.160   & \multicolumn{1}{c|}{2.099}   & 0.367  & \multicolumn{1}{c|}{0.300}  & 0.256  & \multicolumn{1}{c|}{0.250}  & 0.171          & 0.170         \\
			\multicolumn{1}{c|}{3}     & 3.6                            & 0.3                            & \multicolumn{1}{c|}{0.4}       & 11.898 & \multicolumn{1}{c|}{11.898} & 3.705   & \multicolumn{1}{c|}{3.577}   & 0.344  & \multicolumn{1}{c|}{0.300}  & 0.255  & \multicolumn{1}{c|}{0.250}  & 0.172          & 0.170         \\ \hline
			\multicolumn{1}{c|}{4}     & 2.2                            & 0.3                            & \multicolumn{1}{c|}{-0.4}      & 2.045  & \multicolumn{1}{c|}{2.044}  & 1.921   & \multicolumn{1}{c|}{1.877}   & 0.268  & \multicolumn{1}{c|}{0.250}  & -0.042 & \multicolumn{1}{c|}{-0.044} & 0.005          & 0.004         \\
			\multicolumn{1}{c|}{5}     & 6.6                            & 0.3                            & \multicolumn{1}{c|}{-0.4}      & 6.036  & \multicolumn{1}{c|}{6.036}  & 3.335   & \multicolumn{1}{c|}{3.288}   & 0.264  & \multicolumn{1}{c|}{0.250}  & -0.045 & \multicolumn{1}{c|}{-0.044} & 0.010          & 0.004         \\
			\multicolumn{1}{c|}{6}     & 13.2                           & 0.3                            & \multicolumn{1}{c|}{-0.4}      & 12.009 & \multicolumn{1}{c|}{12.009} & 2.214   & \multicolumn{1}{c|}{2.215}   & 0.290  & \multicolumn{1}{c|}{0.250}  & -0.042 & \multicolumn{1}{c|}{-0.044} & 0.009          & 0.004         \\ \hline
			\multicolumn{1}{c|}{7}     & 1.8                            & -0.3                           & \multicolumn{1}{c|}{0.4}       & 2.019  & \multicolumn{1}{c|}{2.019}  & 1.894   & \multicolumn{1}{c|}{1.896}   & -0.263 & \multicolumn{1}{c|}{-0.220} & -0.029 & \multicolumn{1}{c|}{-0.030} & -0.011         & -0.010        \\
			\multicolumn{1}{c|}{8}     & 5.4                            & -0.3                           & \multicolumn{1}{c|}{0.4}       & 5.991  & \multicolumn{1}{c|}{5.991}  & 2.141   & \multicolumn{1}{c|}{2.129}   & -0.257 & \multicolumn{1}{c|}{-0.220} & -0.037 & \multicolumn{1}{c|}{-0.030} & -0.010         & -0.010        \\
			\multicolumn{1}{c|}{9}     & 10.8                           & -0.3                           & \multicolumn{1}{c|}{0.4}       & 12.002 & \multicolumn{1}{c|}{12.002} & 2.157   & \multicolumn{1}{c|}{2.164}   & -0.269 & \multicolumn{1}{c|}{-0.220} & -0.031 & \multicolumn{1}{c|}{-0.030} & -0.012         & -0.011        \\ \hline
			\multicolumn{1}{c|}{10}    & 3.4                            & -0.3                           & \multicolumn{1}{c|}{-0.4}      & 1.994  & \multicolumn{1}{c|}{1.993}  & 1.522   & \multicolumn{1}{c|}{1.534}   & -0.337 & \multicolumn{1}{c|}{-0.320} & 0.231  & \multicolumn{1}{c|}{0.200}  & -0.148         & -0.160        \\
			\multicolumn{1}{c|}{11}    & 10.2                           & -0.3                           & \multicolumn{1}{c|}{-0.4}      & 6.049  & \multicolumn{1}{c|}{6.048}  & 3.697   & \multicolumn{1}{c|}{3.684}   & -0.354 & \multicolumn{1}{c|}{-0.320} & 0.253  & \multicolumn{1}{c|}{0.200}  & -0.169         & -0.160        \\
			\multicolumn{1}{c|}{12}    & 20.4                           & -0.3                           & \multicolumn{1}{c|}{-0.4}      & 12.014 & \multicolumn{1}{c|}{12.014} & 2.412   & \multicolumn{1}{c|}{2.403}   & -0.378 & \multicolumn{1}{c|}{-0.320} & 0.257  & \multicolumn{1}{c|}{0.200}  & -0.189         & -0.160        \\ \hline
			\multicolumn{1}{c|}{13}    & 0.6                            & 0.4                            & \multicolumn{1}{c|}{0.3}       & 2.052  & \multicolumn{1}{c|}{2.052}  & 2.382   & \multicolumn{1}{c|}{2.243}   & 0.455  & \multicolumn{1}{c|}{0.450}  & 0.325  & \multicolumn{1}{c|}{0.290}  & 0.221          & 0.225         \\
			\multicolumn{1}{c|}{14}    & 1.8                            & 0.4                            & \multicolumn{1}{c|}{0.3}       & 5.825  & \multicolumn{1}{c|}{5.825}  & 4.008   & \multicolumn{1}{c|}{3.740}   & 0.426  & \multicolumn{1}{c|}{0.450}  & 0.291  & \multicolumn{1}{c|}{0.290}  & 0.202          & 0.225         \\
			\multicolumn{1}{c|}{15}    & 3.6                            & 0.4                            & \multicolumn{1}{c|}{0.3}       & 12.166 & \multicolumn{1}{c|}{12.166} & 2.760   & \multicolumn{1}{c|}{2.695}   & 0.478  & \multicolumn{1}{c|}{0.450}  & 0.337  & \multicolumn{1}{c|}{0.290}  & 0.240          & 0.225         \\ \hline
			\multicolumn{1}{c|}{16}    & 1.8                            & 0.4                            & \multicolumn{1}{c|}{-0.3}      & 1.980  & \multicolumn{1}{c|}{1.980}  & 1.957   & \multicolumn{1}{c|}{1.888}   & 0.355  & \multicolumn{1}{c|}{0.349}  & 0.032  & \multicolumn{1}{c|}{0.040}  & -0.006         & -0.005        \\
			\multicolumn{1}{c|}{17}    & 5.4                            & 0.4                            & \multicolumn{1}{c|}{-0.3}      & 6.090  & \multicolumn{1}{c|}{6.090}  & 3.847   & \multicolumn{1}{c|}{3.644}   & 0.385  & \multicolumn{1}{c|}{0.349}  & 0.052  & \multicolumn{1}{c|}{0.040}  & -0.009         & -0.005        \\
			\multicolumn{1}{c|}{18}    & 10.8                           & 0.4                            & \multicolumn{1}{c|}{-0.3}      & 11.983 & \multicolumn{1}{c|}{11.982} & 2.330   & \multicolumn{1}{c|}{2.290}   & 0.355  & \multicolumn{1}{c|}{0.349}  & 0.045  & \multicolumn{1}{c|}{0.040}  & -0.005         & -0.005        \\ \hline
			\multicolumn{1}{c|}{19}    & 2.2                            & -0.4                           & \multicolumn{1}{c|}{0.3}       & 3.108  & \multicolumn{1}{c|}{3.108}  & 2.088   & \multicolumn{1}{c|}{2.079}   & -0.300 & \multicolumn{1}{c|}{-0.325} & 0.032  & \multicolumn{1}{c|}{0.029}  & -0.003         & -0.003        \\
			\multicolumn{1}{c|}{20}    & 6.6                            & -0.4                           & \multicolumn{1}{c|}{0.3}       & 6.048  & \multicolumn{1}{c|}{6.048}  & 2.376   & \multicolumn{1}{c|}{2.408}   & -0.360 & \multicolumn{1}{c|}{-0.325} & 0.035  & \multicolumn{1}{c|}{0.029}  & -0.002         & -0.003        \\
			\multicolumn{1}{c|}{21}    & 13.2                           & -0.4                           & \multicolumn{1}{c|}{0.3}       & 12.051 & \multicolumn{1}{c|}{12.051} & 2.336   & \multicolumn{1}{c|}{2.346}   & -0.357 & \multicolumn{1}{c|}{-0.325} & 0.045  & \multicolumn{1}{c|}{0.029}  & -0.005         & -0.003        \\ \hline
			\multicolumn{1}{c|}{22}    & 3.4                            & -0.4                           & \multicolumn{1}{c|}{-0.3}      & 2.011  & \multicolumn{1}{c|}{2.011}  & 2.113   & \multicolumn{1}{c|}{2.176}   & -0.392 & \multicolumn{1}{c|}{-0.460} & 0.267  & \multicolumn{1}{c|}{0.250}  & -0.165         & -0.179        \\
			\multicolumn{1}{c|}{23}    & 10.2                           & -0.4                           & \multicolumn{1}{c|}{-0.3}      & 6.025  & \multicolumn{1}{c|}{6.025}  & 2.784   & \multicolumn{1}{c|}{2.834}   & -0.463 & \multicolumn{1}{c|}{-0.460} & 0.330  & \multicolumn{1}{c|}{0.250}  & -0.228         & -0.179        \\
			\multicolumn{1}{c|}{24}    & 20.4                           & -0.4                           & \multicolumn{1}{c|}{-0.3}      & 11.988 & \multicolumn{1}{c|}{11.988} & 2.623   & \multicolumn{1}{c|}{2.653}   & -0.473 & \multicolumn{1}{c|}{-0.460} & 0.321  & \multicolumn{1}{c|}{0.250}  & -0.216         & -0.179        \\ \hline
			\multicolumn{14}{c}{$c \to 0$, $n=3$}                                                                                                                                                                                                                                                                                        \\ \hline
			\multicolumn{1}{c|}{1}     & 0.6                            & 0.3                            & \multicolumn{1}{c|}{0.4}       & 1.961  & \multicolumn{1}{c|}{1.961}  & 2.004   & \multicolumn{1}{c|}{1.960}   & 0.332  & \multicolumn{1}{c|}{0.300}  & 0.216  & \multicolumn{1}{c|}{0.200}  & 0.159          & 0.140         \\
			\multicolumn{1}{c|}{2}     & 1.8                            & 0.3                            & \multicolumn{1}{c|}{0.4}       & 5.920  & \multicolumn{1}{c|}{5.921}  & 2.441   & \multicolumn{1}{c|}{2.379}   & 0.353  & \multicolumn{1}{c|}{0.300}  & 0.249  & \multicolumn{1}{c|}{0.200}  & 0.171          & 0.140         \\
			\multicolumn{1}{c|}{3}     & 3.6                            & 0.3                            & \multicolumn{1}{c|}{0.4}       & 12.130 & \multicolumn{1}{c|}{12.130} & 2.406   & \multicolumn{1}{c|}{2.377}   & 0.354  & \multicolumn{1}{c|}{0.300}  & 0.230  & \multicolumn{1}{c|}{0.200}  & 0.165          & 0.140         \\ \hline
			\multicolumn{1}{c|}{4}     & 2.2                            & 0.3                            & \multicolumn{1}{c|}{-0.4}      & 2.041  & \multicolumn{1}{c|}{2.040}  & 1.931   & \multicolumn{1}{c|}{1.890}   & 0.283  & \multicolumn{1}{c|}{0.290}  & -0.025 & \multicolumn{1}{c|}{-0.020} & 0.005          & 0.000         \\
			\multicolumn{1}{c|}{5}     & 6.6                            & 0.3                            & \multicolumn{1}{c|}{-0.4}      & 6.044  & \multicolumn{1}{c|}{6.044}  & 2.280   & \multicolumn{1}{c|}{2.241}   & 0.291  & \multicolumn{1}{c|}{0.290}  & -0.022 & \multicolumn{1}{c|}{-0.020} & 0.011          & 0.010         \\
			\multicolumn{1}{c|}{6}     & 13.2                           & 0.3                            & \multicolumn{1}{c|}{-0.4}      & 12.070 & \multicolumn{1}{c|}{12.070} & 2.202   & \multicolumn{1}{c|}{2.199}   & 0.262  & \multicolumn{1}{c|}{0.290}  & -0.027 & \multicolumn{1}{c|}{-0.020} & 0.011          & 0.010         \\ \hline
			\multicolumn{1}{c|}{7}     & 1.8                            & -0.3                           & \multicolumn{1}{c|}{0.4}       & 1.996  & \multicolumn{1}{c|}{1.994}  & 1.860   & \multicolumn{1}{c|}{1.878}   & -0.260 & \multicolumn{1}{c|}{-0.267} & -0.029 & \multicolumn{1}{c|}{-0.030} & -0.008         & -0.003        \\
			\multicolumn{1}{c|}{8}     & 5.4                            & -0.3                           & \multicolumn{1}{c|}{0.4}       & 6.026  & \multicolumn{1}{c|}{7.124}  & 2.826   & \multicolumn{1}{c|}{2.833}   & -0.266 & \multicolumn{1}{c|}{-0.267} & -0.026 & \multicolumn{1}{c|}{-0.031} & -0.005         & -0.003        \\
			\multicolumn{1}{c|}{9}     & 10.8                           & -0.3                           & \multicolumn{1}{c|}{0.4}       & 12.022 & \multicolumn{1}{c|}{12.022} & 2.244   & \multicolumn{1}{c|}{2.238}   & -0.272 & \multicolumn{1}{c|}{-0.271} & -0.033 & \multicolumn{1}{c|}{-0.035} & -0.008         & -0.003        \\ \hline
			\multicolumn{1}{c|}{10}    & 3.4                            & -0.3                           & \multicolumn{1}{c|}{-0.4}      & 2.027  & \multicolumn{1}{c|}{14.276} & 2.472   & \multicolumn{1}{c|}{2.474}   & -0.319 & \multicolumn{1}{c|}{-0.500} & 0.199  & \multicolumn{1}{c|}{0.243}  & -0.142         & -0.140        \\
			\multicolumn{1}{c|}{11}    & 10.2                           & -0.3                           & \multicolumn{1}{c|}{-0.4}      & 6.062  & \multicolumn{1}{c|}{9.880}  & 3.837   & \multicolumn{1}{c|}{3.843}   & -0.360 & \multicolumn{1}{c|}{-0.500} & 0.269  & \multicolumn{1}{c|}{0.243}  & -0.145         & -0.140        \\
			\multicolumn{1}{c|}{12}    & 20.4                           & -0.3                           & \multicolumn{1}{c|}{-0.4}      & 11.975 & \multicolumn{1}{c|}{11.977} & 2.395   & \multicolumn{1}{c|}{2.396}   & -0.360 & \multicolumn{1}{c|}{-0.361} & 0.242  & \multicolumn{1}{c|}{0.243}  & -0.141         & -0.140        \\ \hline
			\multicolumn{1}{c|}{13}    & 0.6                            & 0.4                            & \multicolumn{1}{c|}{0.3}       & 1.951  & \multicolumn{1}{c|}{1.955}  & 2.267   & \multicolumn{1}{c|}{2.260}   & 0.438  & \multicolumn{1}{c|}{0.440}  & 0.282  & \multicolumn{1}{c|}{0.280}  & 0.188          & 0.200         \\
			\multicolumn{1}{c|}{14}    & 1.8                            & 0.4                            & \multicolumn{1}{c|}{0.3}       & 5.939  & \multicolumn{1}{c|}{6.218}  & 4.377   & \multicolumn{1}{c|}{4.299}   & 0.458  & \multicolumn{1}{c|}{0.454}  & 0.318  & \multicolumn{1}{c|}{0.324}  & 0.223          & 0.206         \\
			\multicolumn{1}{c|}{15}    & 3.6                            & 0.4                            & \multicolumn{1}{c|}{0.3}       & 12.060 & \multicolumn{1}{c|}{12.056} & 2.689   & \multicolumn{1}{c|}{2.676}   & 0.473  & \multicolumn{1}{c|}{0.466}  & 0.317  & \multicolumn{1}{c|}{0.311}  & 0.222          & 0.207         \\ \hline
			\multicolumn{1}{c|}{16}    & 1.8                            & 0.4                            & \multicolumn{1}{c|}{-0.3}      & 2.036  & \multicolumn{1}{c|}{2.040}  & 2.057   & \multicolumn{1}{c|}{2.061}   & 0.361  & \multicolumn{1}{c|}{0.362}  & 0.041  & \multicolumn{1}{c|}{0.040}  & 0.011          & 0.010         \\
			\multicolumn{1}{c|}{17}    & 5.4                            & 0.4                            & \multicolumn{1}{c|}{-0.3}      & 6.032  & \multicolumn{1}{c|}{6.034}  & 3.747   & \multicolumn{1}{c|}{3.723}   & 0.361  & \multicolumn{1}{c|}{0.361}  & 0.028  & \multicolumn{1}{c|}{0.031}  & 0.006          & 0.003         \\
			\multicolumn{1}{c|}{18}    & 10.8                           & 0.4                            & \multicolumn{1}{c|}{-0.3}      & 11.915 & \multicolumn{1}{c|}{12.280} & 2.363   & \multicolumn{1}{c|}{2.362}   & 0.350  & \multicolumn{1}{c|}{0.259}  & 0.042  & \multicolumn{1}{c|}{0.046}  & 0.012          & 0.012         \\ \hline
			\multicolumn{1}{c|}{19}    & 2.2                            & -0.4                           & \multicolumn{1}{c|}{0.3}       & 1.996  & \multicolumn{1}{c|}{1.987}  & 1.962   & \multicolumn{1}{c|}{2.017}   & -0.329 & \multicolumn{1}{c|}{-0.354} & 0.026  & \multicolumn{1}{c|}{0.039}  & -0.005         & -0.004        \\
			\multicolumn{1}{c|}{20}    & 6.6                            & -0.4                           & \multicolumn{1}{c|}{0.3}       & 5.970  & \multicolumn{1}{c|}{5.970}  & 2.341   & \multicolumn{1}{c|}{2.342}   & -0.347 & \multicolumn{1}{c|}{-0.352} & 0.034  & \multicolumn{1}{c|}{0.036}  & -0.007         & -0.004        \\
			\multicolumn{1}{c|}{21}    & 13.2                           & -0.4                           & \multicolumn{1}{c|}{0.3}       & 12.031 & \multicolumn{1}{c|}{11.866} & 2.455   & \multicolumn{1}{c|}{2.445}   & -0.370 & \multicolumn{1}{c|}{-0.281} & 0.057  & \multicolumn{1}{c|}{0.079}  & -0.010         & -0.004        \\ \hline
			\multicolumn{1}{c|}{22}    & 3.4                            & -0.4                           & \multicolumn{1}{c|}{-0.3}      & 2.015  & \multicolumn{1}{c|}{2.000}  & 2.310   & \multicolumn{1}{c|}{2.300}   & -0.405 & \multicolumn{1}{c|}{-0.435} & 0.292  & \multicolumn{1}{c|}{0.325}  & -0.179         & -0.166        \\
			\multicolumn{1}{c|}{23}    & 10.2                           & -0.4                           & \multicolumn{1}{c|}{-0.3}      & 6.000  & \multicolumn{1}{c|}{6.571}  & 2.756   & \multicolumn{1}{c|}{2.749}   & -0.456 & \multicolumn{1}{c|}{-0.490} & 0.306  & \multicolumn{1}{c|}{0.308}  & -0.222         & -0.220        \\
			\multicolumn{1}{c|}{24}    & 20.4                           & -0.4                           & \multicolumn{1}{c|}{-0.3}      & 12.025 & \multicolumn{1}{c|}{12.027} & 2.654   & \multicolumn{1}{c|}{2.686}   & -0.468 & \multicolumn{1}{c|}{-0.476} & 0.327  & \multicolumn{1}{c|}{0.328}  & -0.228         & -0.226        \\ \hline
	\end{tabular}}
	\label{tab2}
\end{table}
In the subsequent section, we briefly discuss how neural network architecture is embedded into the INGARCH framework and define the neural NB- INGARCH model. 
\section{Brief overview of neural INGARCH models}\label{ovr}

The INGARCH(p,q) model in (\ref{nbing}) can be reformulated to a general expression:
\begin{equation}
	\label{eqlam}
	E[X_t | \mathscr{F}_{t-1}] = \lambda_t = g(X_{t-1},\ldots,X_{t-p},\lambda_{t-1},\ldots,\lambda_{t-p}),
\end{equation}
where $g$ is the response function of choice, $\mathscr{F}_{t-1}$ reprises the definition in (\ref{sppois}), and the conditional distribution of $X_t$ is a suitable discrete distribution with mean $\lambda_t$. Defining $ g(x) = x $ simplifies (\ref{eqlam})  to the earlier representation (\ref{nbing}).  \cite{jahn2023artificial}  considered response functions from the class of artificial neural networks. In particular, $g$ is assumed to be a single hidden layer feedforward network (SLFN). Let $g$ be defined as
\begin{equation}
	\label{eqann}
	g^{\textrm{ANN}}(u^0,u^1,\mathbf{x}) = f_1\left(\sum_{l=1}^{L}u_l^1.f_0\left(\sum_{k=1}^K u_{kl}^0x_k\right)\right).
\end{equation}

The SLFN function relies on a set of parameters $ u_l^1$ and $u_{kl}^0$. The input vector  $\mathbf{x}$ includes a constant and the respective lagged values, given by $\mathbf{x} = (1, X_{t-1}, \ldots, X_{t-p}, \lambda_{t-1}, \ldots, \lambda_{t-q} )$ with $K = p + q + 1$ elements. The index $l$ $( 1 \leq l \leq L )$ refers to the hidden neurons, where $L$ indicates the model's complexity. In practice, the optimal number of hidden neurons $L$ can be determined using information criteria. The activation functions $f_0$ and $f_1$ are required to be increasing and continuously differentiable. Due to the nonlinearity of $f_0$ and $f_1$, the neural INGARCH model can be essentially interpreted as: The conditional expectation of $X_t$ is a nonlinear combination of the nonlinear combinations of $X_{t-1}, \ldots, X_{t-p}, \lambda_{t-1}, \ldots, \lambda_{t-q}$. \par
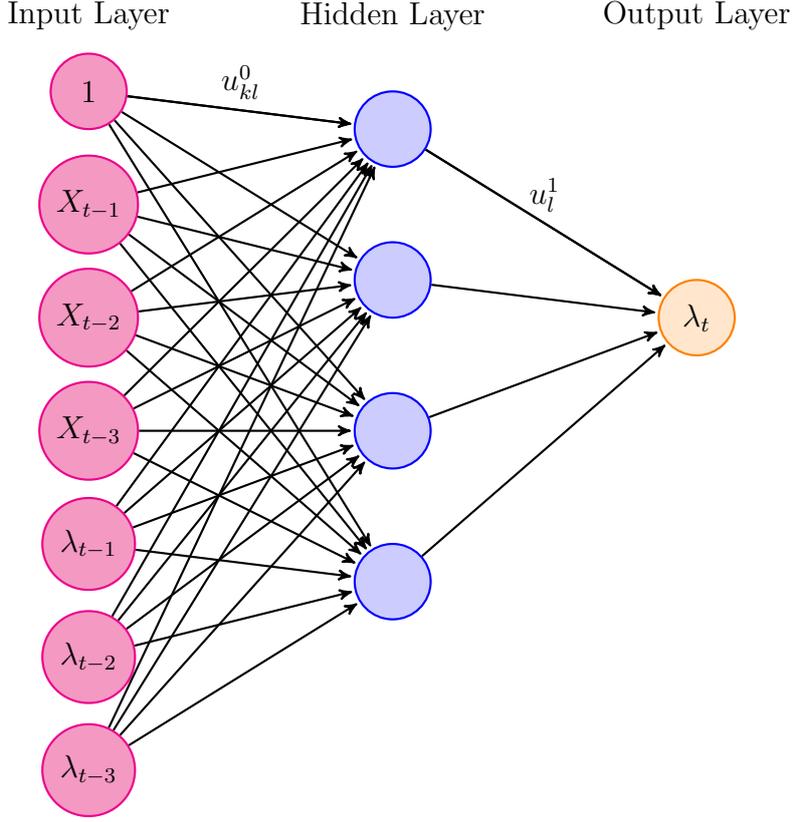
\begin{figure}[ht]
	\centering
	\begin{tikzpicture}[->, >=stealth', shorten >=1pt, auto, node distance=2cm, 
		thick, main node/.style={circle, draw, minimum size=1cm}]
		
		\node at (0,6) {Input Layer};
		
		\node[main node, fill=magenta!50, draw=magenta] (I1) at (0,5) {$1$};
		\node[main node, fill=magenta!50, draw=magenta] (I2) at (0,3.5) {$X_{t-1}$};
		\node[main node, fill=magenta!50, draw=magenta] (I3) at (0,2) {$X_{t-2}$};
		\node[main node, fill=magenta!50, draw=magenta] (I4) at (0,0.5) {$X_{t-3}$};
		\node[main node, fill=magenta!50, draw=magenta] (I5) at (0,-1) {$\lambda_{t-1}$};
		\node[main node, fill=magenta!50, draw=magenta] (I6) at (0,-2.5) {$\lambda_{t-2}$};
		\node[main node, fill=magenta!50, draw=magenta] (I7) at (0,-4) {$\lambda_{t-3}$};
		
		\node at (4,6) {Hidden Layer};
		
		\node[main node, fill=blue!20, draw=blue] (H1) at (4,4.5){} ;
		\node[main node, fill=blue!20, draw=blue] (H2) at (4,2.5){} ;
		\node[main node, fill=blue!20, draw=blue] (H3) at (4,0.5) {};
		\node[main node, fill=blue!20, draw=blue] (H4) at (4,-1.5) {};
		
		\node at (8,6) {Output Layer};
		
		\node[main node, fill=orange!20, draw=orange] (O) at (8,2) {$\lambda_t$};
		
		\foreach \i in {1,2,3,4,5,6,7} {
			\foreach \h in {1,2,3,4} {
				\draw (I\i) -- (H\h);
			}
		}
		
		\draw (I1) -- node[midway, above] {$u_{kl}^0$} (H1);
		
		\foreach \h in {1,2,3,4} {
			\draw (H\h) -- (O);
		}
		
		\draw (H1) -- node[midway, above] {$ u_l^1$} (O);
	\end{tikzpicture}
	\caption{Neural network architecture of neural INGARCH(3,3) model with four hidden neurons.}
	\label{fig:ann}
\end{figure}
\textcolor{blue}{Figure}  \ref{fig:ann} illustrates a specific configuration of a neural INGARCH(3,3) model, highlighting the concept of hidden neurons and activation functions. The SLFN consists of three layers: an input layer, a hidden layer, and an output layer. The neurons (depicted as circles) in the input layer correspond to the regressor variables. The hidden layer, which is unobserved, is exemplified with four hidden neurons. The single neuron in the output layer represents the conditional mean of the current value of the time series. Alternatively, an illustration could focus on the target value \(X_t\). As indicated by the direction of the arrows, information flows only in the forward direction from the input layer through the hidden layer to the output layer, which characterizes the network as a single hidden layer ``feedforward" network.\par
The trick lies in the choice of the activation functions $f_0 $ and $f_1$. The term ``activation'' refers to the biological analogy: a neuron transmits information based on its activation level, which is determined by the preceding layer. For the activation of the hidden layer, \cite{jahn2023artificial} used the logistic function, introducing the desired non-linearity into the model, thus $f_0(x) = \frac{1}{1 + \exp(-x)}$. This implies that the values of the hidden neurons range from 0 (completely inactive) to 1 (completely active). The activation of the output layer requires more careful consideration, with the choice of $f_1$ depending on the data. The softplus function, with $c=1$, i.e., $f_1(x) = \ln(1 + \exp(x))$ seemed suitable based on the requirements. \par

For a conditional Poisson distribution, the conditional log-likelihood function of an INGARCH model with response function $g$  (denoted as neu - PINGARCH(p,q)) and parameter vector $\mathbf{u}$ is
\begin{equation}
	l(\mathbf{u}) = \sum_{t} X_t \log g(\mathbf{u}, \mathbf{x}) - g(\mathbf{u}, \mathbf{x}) - log(X_t!), 
\end{equation}
which needs to be maximized with respect to $\mathbf{u}$. For practical modeling, it is noteworthy that the numerical maximization speed can be significantly improved in the case of a purely autoregressive model $ q = 0$. On the other hand, if $q > 0$, the log-likelihood function (and the gradient) are defined recursively, implying that an optimization algorithm needs to loop through the individual time periods in each iteration. According to the chain rule, we have the following relationship:

\begin{equation}
	\frac{\partial l}{\partial \mathbf{u}} = \frac{\partial l}{\partial g} \frac{\partial g}{\partial \mathbf{u}}.
\end{equation}

Regarding the gradient, the outer derivative is simply the derivative of the conditional Poisson log-likelihood with respect to the Poisson parameter $g(\mathbf{u}, \mathbf{x})$:
\begin{equation}
	\frac{\partial l}{\partial g} = \sum_t \left( \frac{X_t}{g(\mathbf{u}, \mathbf{x})} - 1 \right).
\end{equation}

The partial derivatives $ \frac{\partial g}{\partial \mathbf{u}}$ can be computed via the chain rule. In the case of the ANN response function $g = g_{\text{ANN}}$, this can be done efficiently using the backpropagation procedure. The only difference from conventional backpropagation is that the error which is back-propagated is now the relative error instead of the plain error, which arises from the first-order condition of the least squares approach. For efficient computation, it is advisable to use activation functions with simple derivatives. In this example, for the softplus function, $ f_1' = f_0 $ (i.e., the derivative of the softplus function is the logistic function), and for the logistic function, $f_0' = f_0 (1 - f_0)$. In what follows, we introduce the neural negative binomial INGARCH model.
\subsection{A neural negative binomial INGARCH model}
The neural negative binomial INGARCH model may be defined by
\begin{equation}
	X_t \mid \mathscr{F}_{t-1} \sim  NB(n,p_t),  \quad
	\lambda_t=\frac{n(1-p_t)}{p_t},
\end{equation}
where $\lambda_t = g(X_{t-1},\ldots,X_{t-p},\lambda_{t-1},\ldots,\lambda_{t-q})$ satisfies
(\ref{eqann}), with $f_0$ being chosen as the logistic function and $f_1$ as the softplus function with $c=1$ and the rest of the parameters satisfying conditions mentioned earlier. We denote the model as neu - NB-INGARCH(p,q). Thus, the conditional log likelihood reduces to
\begin{equation}
	\label{likl}
	l(\mathbf{u}) = \sum_{t=1}^{n} \left\{x_tlog\left(\frac{g(\mathbf{u,x})}{n}\right) - (n+x_t)log\left(1+\frac{g(\mathbf{u,x})}{n}\right) + e\right\},
\end{equation}
where $e=\sum_{v=1}^{x_t}log(v+n-1) - log(x_t!)$. The CMLEs of $\mathbf{u}$ can be obtained by maximizing (\ref{likl}), which is implemented using numerical methods. Details on model selection and diagnostics are given in \cite{jahn2023artificial}.

In the next section, we discuss a simulation study to better understand the estimation procedure in the case of softplus negative binomial INGARCH model.
\section{Simulation Study}
In this section, a numerical simulation study is performed to validate the conditional maximum likelihood estimation method in the case of softplus NB-INGARCH model discussed in \textcolor{blue}{Section} \ref{spn}. We generate samples sized $100, 500$, and $1000$ from sp NB-INGARCH (1,1) for 16 parameter combinations. For each configuration, we repeat 1000 iterations, calculate the average, and then calculate the average absolute bias (Abs. Bias) (i.e., $ \frac{1}{1000}\sum_{k=1}^{1000} |\hat{\Theta}_i^{(k)}- \Theta_i^0|,$ with $\Theta_i^0$ as the true value of the $i^{th}$ parameter and $\hat{\Theta}_i^{(k)}$ as the corresponding CMLE for the $k^{th}$ replication) and mean square error (MSE) of the estimates as shown in \textcolor{blue}{Tables} \ref{tab3} and \ref{tab4} for the cases when $c=1$ and $c=0.5$ respectively.\par

From the simulation results, it can be seen that the absolute bias and MSE decreases with increase in sample size. In particular, when \( \alpha_0 \), the estimates provided by specifying $c=0.5$ are more accurate and close to the true values, and the standard error of the estimate of \( \alpha_0 \) is smaller. Interestingly, for many cases,  even the estimates obtained for samples sized $100$ are very close to the actual values of the parameters.

A similar study is not applicable to neural INGARCH models for a couple of reasons. Firstly, the number of parameters in $\mathbf{u}$ may vary from sample to sample. Further, an individual estimation or optimization
need not always converge and so any assessment that requires
estimating the neural INGARCH model a large number of times should be avoided. In effect, the prediction accuracy has a more important role when it comes to a neural network model than the finite sample properties of the estimator as such models often cater to big data.
\begin{table}[H]
	\center
	\caption{CML estimation for simulated softplus INGARCH(1,1) processes: mean, absolute bias, and MSE of the estimates, for $c=1$.}
	\renewcommand{\arraystretch}{1.2}
	\scalebox{0.55}{%
}
	\label{tab4}
\end{table}
In the following section, we proceed to analyse two real data sets to assert the applicability of the softplus and neural NB - INGARCH models.
\section{Real Data Applications}
In this section, we demonstrate the real-world applications of the softplus negative binomial INGARCH and neural negative binomial INGARCH models. The first study showcases the application of sp NB-INGARCH model and compares its performance with softplus Poisson INGARCH model. The second data analysis deals with a non-stationary healthcare data and elucidates how a neu-NB-INGARCH model can handle such a scenario through a hybrid framework.
\subsection{Syphilis Count Data}
The data set consists of weekly number of syphilis cases reported in the West South Central states of Arkansas, Lousiana, Oklahoma and Texas of the United States from 2007 to 2010. This data set is available in the R package ZIM. \textcolor{blue}{Figure} \ref{syphfig} shows the basic structure of the data such as time series plot, ACF, and partial ACF (PACF) plots. The data is overdispersed, with an average value of $12.646$ and a dispersion (ratio of sample variance to mean) of $9.331$, so we choose softplus negattive binomial analysis as the appropriate inference method. We fit the sp PINGARCH (1,0), sp PINGARCH(2,0) and sp PINGARCH(1,1) models with Poisson conditional distribution, and sp NB-INGARCH (1,0), sp NB-INGARCH(2,0) and sp NB-INGARCH(1,1)  for comparison. Using the estimation method detailed in \cite{weiss2022softplus}(for sp PINGARCH models) and \textcolor{blue}{Subsection} \ref{est} of this article, we obtain the parameter estimates for the models, showing the corresponding Akaike information criterion (AIC) and Bayesian information criterion (BIC) values in \textcolor{blue}{Table} \ref{tab5}.

\begin{figure}[H]
	\centering
	\includegraphics[scale=0.5]{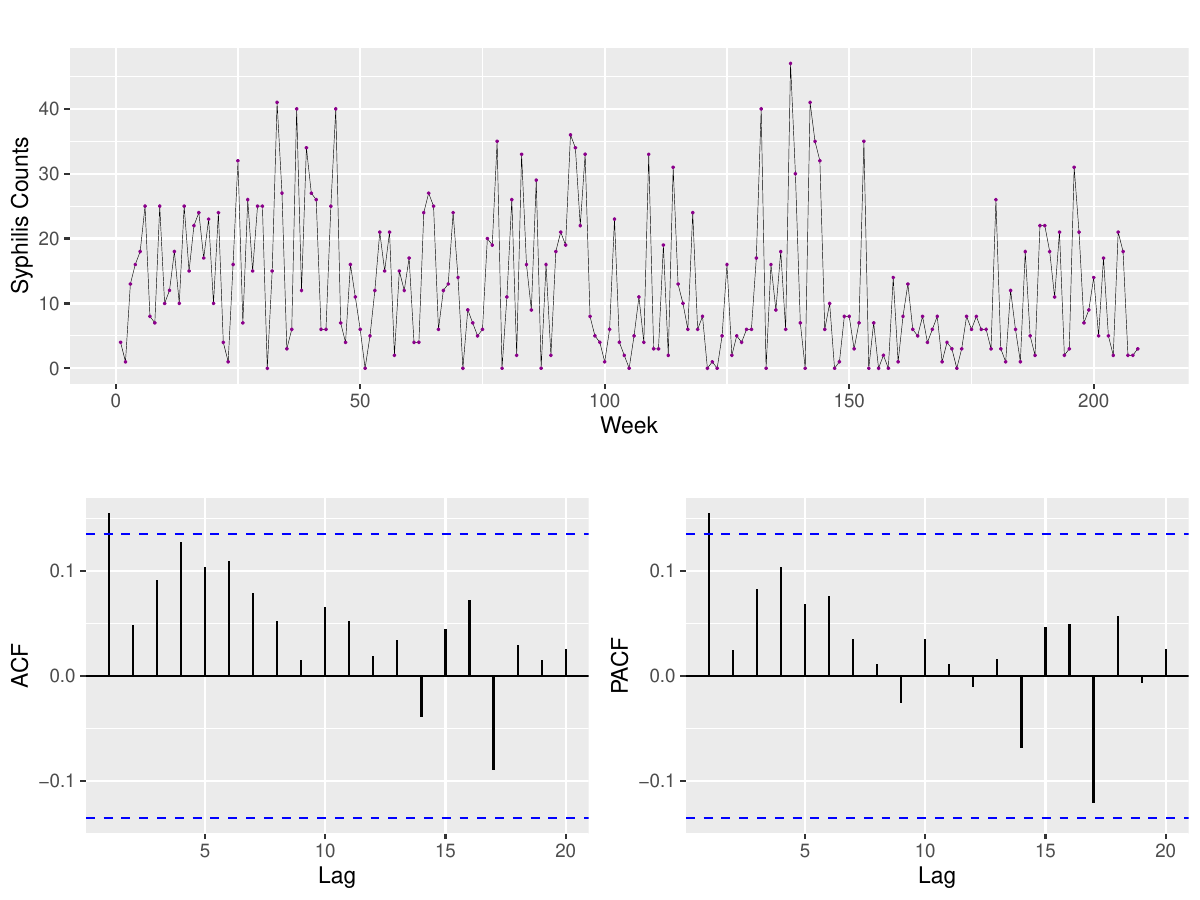}
	\caption{Time series, ACF and PACF plots for Syphilis data }
	\label{syphfig}
\end{figure}
As observed, the softplus NB-INGARCH models are superior to the sp PINGARCH models, yielding lower AIC and BIC values.  The least AIC and BIC value correspond to sp NB-INGARCH(2,0) model, and hence the model seems to provide a better fit than its counterparts chosen for comparison. The model adequacy diagnostics plots using ACF plots of residuals and cumulative periodograms for the sp PINGARCH and sp NB-INGARCH models are presented as \textcolor{blue}{Figures} \ref{fig7} and \ref{fig8} respectively in \textcolor{blue}{Appendix} \ref{diag}. One might be curious as to why a data with negative ACF was not chosen for analysis; as for the present data, the ACF at lag 1 is 0.155. This has already been dealt with by  \cite{weiss2022softplus} (Chemical process data, with ACF at lag 1 as -0.588), in which the sp NB-INGARCH(1,1) was found to be an appropriate fit.
\begin{table}[H]
	\centering
	\caption{Estimates (Standard errors in parantheses), AIC and BIC of softplus INGARCH  models fitted to Syphilis data.}
	\renewcommand{\arraystretch}{1.2}
	\scalebox{0.8}{%
		\begin{tabular}{lllllll}
			\hline
			\multirow{2}{*}{Model}                 & \multicolumn{4}{c}{Parameter}             & \multicolumn{1}{c}{\multirow{2}{*}{AIC}} & \multicolumn{1}{c}{\multirow{2}{*}{BIC}} \\ \cline{2-5}
			& 1        & 2        & 3        & 4        & \multicolumn{1}{c}{}                     & \multicolumn{1}{c}{}                     \\ \hline
			sp PINGARCH(1,0)                       & 10.6634  & 0.1595   &          &          & \multirow{2}{*}{2715.43}                 & \multirow{2}{*}{2722.11}                 \\
			($\alpha_0$, $\alpha_1$)               & (0.3694) & (0.0243) &          &          &                                          &                                          \\
			sp PINGARCH(2,0)                       & 10.4347  & 0.1514   & 0.0298   &          & \multirow{2}{*}{2697.95}                 & \multirow{2}{*}{2707.95}                 \\
			($\alpha_0$,$\alpha_1$,$\alpha_2$)     & (0.4487) & (0.0245) & (0.0233) &          &                                          &                                          \\
			sp PINGARCH(1,1)                       & 1.1202   & 0.1006   & 0.8102   &          & \multirow{2}{*}{2672.65}                 & \multirow{2}{*}{2682.66}                 \\
			($\alpha_0$,$\alpha_1$,$\beta_1$)      & (0.3185) & (0.0160) & (0.0348) &          &                                          &                                          \\ \hline
			sp NB-INGARCH(1,0)                     & 10.6054  & 0.1646   & 1.2224   &          & \multirow{2}{*}{1488.14}                 & \multirow{2}{*}{1498.15}                 \\
			($\alpha_0$, $\alpha_1$,$n$)           & (1.2123) & (0.0875) & (0.1326) &          &                                          &                                          \\
			sp NB-INGARCH(2,0)                     & 10.3475  & 0.1564   & 0.0324   & 1.2358   & \multirow{2}{*}{1484.47}                 & \multirow{2}{*}{1497.80}                 \\
			($\alpha_0$,$\alpha_1$,$\alpha_2$,$n$) & (1.4788) & (0.0877) & (0.0790) & (0.1346) &                                          &                                          \\
			sp NB-INGARCH(1,1)                     & 1.0118   & 0.1073   & 0.8125   & 1.2535   & \multirow{2}{*}{1485.40}                 & \multirow{2}{*}{1498.73}                 \\
			($\alpha_0$,$\alpha_1$,$\beta_1$,$n$)  & (0.9283) & (0.0552) & (0.1069) & (0.1369) &                                          &                                          \\ \hline
	\end{tabular}}
	\label{tab5}
\end{table}
In the next subsection, we proceed to present the motivation and analyse a healthcare data on Emergency Department (ED) visits using neural INGARCH models for various conditional distributions. 
\subsection{Emergency Department (ED) Arrivals Data}
The hospital emergency department (ED) is the fundamental unit providing immediate response to emergency health issues. It is an essential component of any health system, offering care for urgent and potentially serious pathological conditions that could result in death or require immediate diagnosis and treatment to alleviate pain. ED activities are both intense and highly diverse, ranging from life-threatening conditions like cardiac arrest to serious or potentially serious illnesses needing hospital-based diagnosis or treatment. EDs also handle less critical emergencies that may necessitate hospitalization for diagnosis and provide initial treatment and observation without necessarily requiring admission. (See \cite{reboredo2023forecasting})

Patient arrivals in emergency departments (EDs) are uneven over time. The distribution of arrivals may vary according to the days of the week, or even over months (See \cite{example9}, \cite{example20}), with demand for care fluctuating based on holiday periods (demographic movements), respiratory virus epidemics, climatic and atmospheric changes, and social events \cite{example20}. Managing surges, which is a key challenge for ensuring efficient ED management and functioning \cite{example7}, \cite{example16}, \cite{example21}, \cite{example26}, is closely tied to the timely provision of treatment. Therefore, ED and hospital resources need to be planned with flexibility to adapt to cyclical changes in service demand.

Beyond the quantitative aspects of patient arrivals, there is a significant qualitative impact. The diagnostic and therapeutic activities in EDs influence the subsequent health outcomes of admitted patients, including the length of stay, complications, and patient satisfaction. Patient satisfaction with healthcare services in general is strongly influenced by technical quality and, importantly, by the perceived quality of the ED, which shapes overall perceptions of hospital performance \cite{example11}.

To avoid congestion and ensure appropriate delivery of medical services, efficient ED management requires accurate forecasting of patient inflows (See \cite{example3}, \cite{example5}, \cite{example8}). However, forecasting is challenging due to the high variability and overdispersion in daily and seasonal patient arrivals \cite{example20}, \cite{example23}. Previous empirical research has explored the dynamics of arrivals, primarily using Poisson and negative binomial models with various extensions (See \cite{example3}, \cite{example28}, \cite{example34}, \cite{example35}). \cite{reboredo2023forecasting} showed that the negative binomial INGARCH model, proposed by \cite{xu2012model}, can provide an improved fit and more accurate forecasts of patient arrivals by leveraging historical arrival data and capturing the volatility dynamics in the arrival process. However, the suggested models cannot be applied in the case of extreme non-linear dependence or non-stationarity in the data. Hence, we proceed to illustrate the use of neural INGARCH models in such scenario.

The data for this study consists of hourly patient arrivals at one of the largest emergency department units in the UK (name of the establishment is not disclosed), spanning the period from April 2014 to February 2019. These data were obtained from the hospital's emergency department administrative database (See \cite{rostami}).
The data is available at \href{https://doi.org/10.5281/zenodo.7874721}{https://doi.org/10.5281/zenodo.7874721}. In light of the repetitive patterns observed throughout the data, as affirmed by \textcolor{blue}{Figures} \ref{fig1} and \ref{fig2}, for the present study, we consider the data of the period ranging from 1st December 2018 to 1st February 2019. The training set consists of the data upto 31st January 2019 (including  31st January 2019) and the test set considers data on  1st February 2019. The mean and dispersion ratio of the number of arrivals are 15.057 and 5.170 respectively, indicating overdispersion.\par
\begin{figure}[ht]
	\centering
	\includegraphics[scale=0.7]{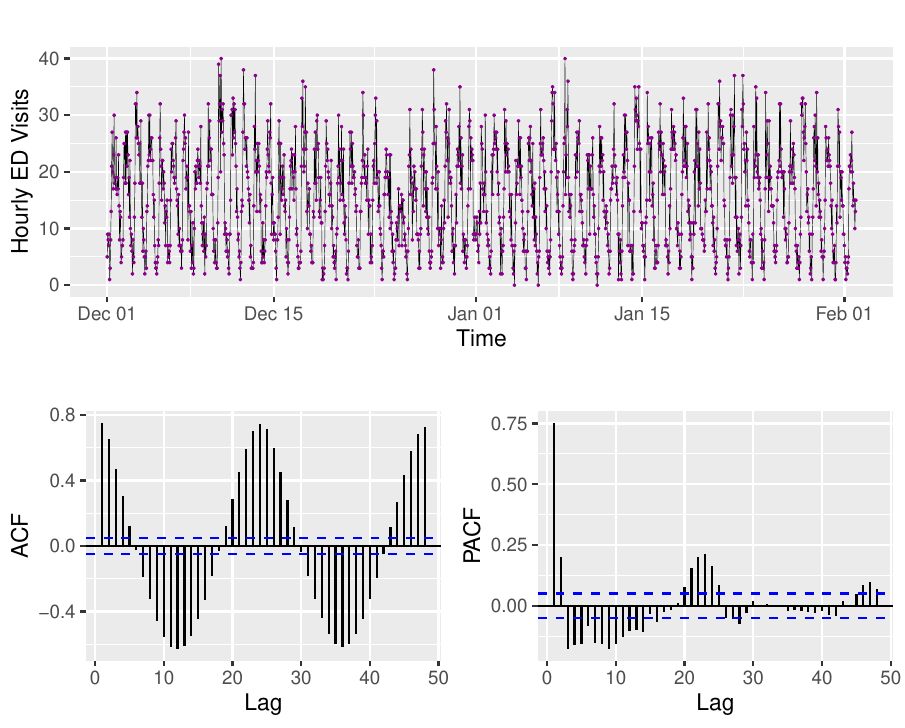}
	\caption{ Time series, ACF and PACF plots of ED arrivals.}
	\label{fig1}
\end{figure}
\textcolor{blue}{Figure} \ref{fig1} clearly illustrates that the inflow appears to follow a seasonal pattern with period 12 from the ACF plot. This expected pattern can be explained by the fact that the data is hourly. Thus, the data is non-stationary and cannot be at large handled by linear INGARCH models. Further, from {Figure} \ref{fig2} in \textcolor{blue}{Appendix} \ref{explore}, we can infer that the number of arrivals is higher from 9 am to 6 pm as compared to the other intervals.  \textcolor{blue}{Figure} \ref{fig3} shows that the arrival rate on Mondays and Saturdays are significantly higher than on other weekdays. \textcolor{blue}{Figure} \ref{fig4} depicts that the number of arrivals on a non-holiday are higher than that on a holiday.

\subsubsection{Model and Analysis}
The analysis is done in two steps like the hybrid variants discussed in \textcolor{blue}{Section} \ref{intro}. In the first step, a neural INGARCH model is fitted to the time series. For example, for a neural INGARCH(1,1) model:
\begin{equation}
	E[X_t\mid \mathscr{F}_{t-1}] = \lambda_t = 	g^{\textrm{ANN}}(u^0,u^1,\mathbf{x}) ,
\end{equation}
where $\mathbf{x} = (1,X_{t-1},\lambda_{t-1})$, and $g^{\textrm{ANN}}(.)$ is defined as in (\ref{eqann}) with $K=3$ for neural INGARCH(1,1). Models were fitted assuming Poisson and negative binomial distributions for $X_t\mid \mathscr{F}_{t-1}$.
However, the ACF of the pearson residuals ( denoted by $\{Z_t\}$) of the fitted models still showed seasonal behaviour as can be seen in \textcolor{blue}{Figures} \ref{fig5} and \ref{fig9}. Hence, to account for the seasonal pattern and other features of the data found in the exploratory analysis, a seasonal autoregressive integrated moving average with exogenous factors (SARIMAX) model (See \cite{aburto2007improved} and \cite{cools2009investigating})  is fitted to the residuals obtained from fitting the neural INGARCH models. 
\textcolor{blue}{Equation} (\ref{eqmodel}) presents the suggested form of SARIMAX model:

\begin{equation}
	\label{eqmodel}
	Z_t = \boldsymbol{\gamma}Y_{t} + \frac{\theta_q(B)\Theta_Q(B^S)}{\phi_p(B)\Phi_P(B^S)(1 - B)^d(1 - B^S)^D} \varepsilon_t, 
\end{equation}
where $Y= [I_{\textrm{Monday}}, I_{\textrm{Weekend}}, I_{\textrm{Winter}}]$, $I(.)$ denotes the indicator function corresponding to the factors whether the day of the observation is a Monday, Weekend, i.e., either Friday or Saturday, and/or is a holiday; $\boldsymbol{\gamma}$ denotes the corresponding vector of coefficients of the indicator variables,  $\theta_q(B)$ is the non-seasonal moving average operator with q-order, $\Theta_Q(B^S)$ the seasonal moving average operator with Q-order, $\phi_p(B)$ is the non-seasonal autoregressive operator with p-order,$\Phi_P(B^S)$ is the seasonal autoregressive operator with P-order, $(1 - B^S)^D$ is the seasonal differencing operator of order D,
$(1 - B)^d$ is the differencing operator of order d,  $S$ denotes Seasonal length (for the present hourly data $S=24$) and the residual error sequence $\{\varepsilon_t\}$ is assumed to be white noise.
We fit six models to the data -- neu PINGARCH(1,0)), neural neu - PINGARCH(2,0), neu PINGARCH (1,1), neu-NB-INGARCH(1,0), neu - NB- INGARCH(2,0) and neu-NB-INGARCH(1,1), based on the large significance at lag 1 displayed in the PACF plot of the data in \textcolor{blue}{Figure} \ref{fig1}. \textcolor{blue}{Figures} \ref{fig6} and \ref{fig10} display the ACF plots and cumulative periodograms for residuals on fitting the SARIMAX models. The best performing model is neu-NB-INGARCH(1,1) with SARIMAX(2,0,0)(2,1,1)[24] fitted to residuals, based on the minimum combined information criteria of both the consitutent models in \textcolor{blue}{Table} \ref{tab6}. The out-of-sample forecasting accuracy is computed for the four best performing models and is tabulated in \textcolor{blue}{Table} \ref{tab7}. The forecasts corresponding to the test set for neu-NB-INGARCH(1,1) hybrid model is plotted in \textcolor{blue}{Figure} \ref{fig11}.
	\begin{figure}[H]
	\centering
	\includegraphics[scale=0.5]{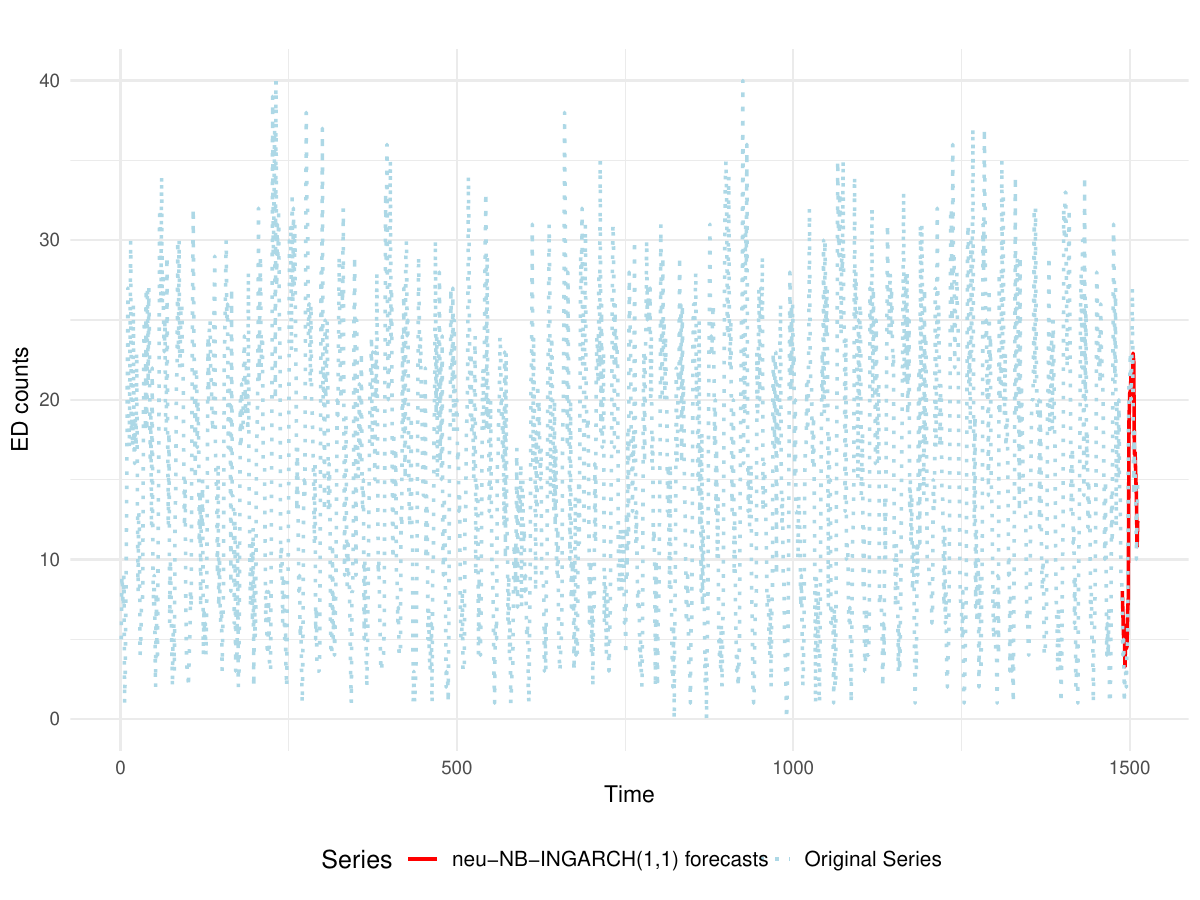}
	\caption{Time series plot of the original ED series along with neu-NB-INGARCH(1,1) forecasts for test set.}
	\label{fig11}
\end{figure}
\section{Conclusion}

This paper applies the softplus function as a link function to the NB-INGARCH model, extending it to the sp NB- INGARCH model. Through approximation of moments and conditional maximum likelihood estimation, we obtain the estimates of the parameters which are validated through a simulation study and data analysis. Furthermore, we upgrade the model to a neural NB-INGARCH model, by incorporating an ANN architecture. Based on real data analysis, we observe that the neural NB-INGARCH model is effective in handling large count data, which is non-stationary and overdispersed.
	\begin{landscape}
	
	\begin{table}[ht]
		\centering
		\caption{Estimates (Standard errors in parantheses), AIC and BIC of neural INGARCH  models fitted to ED data.}
		\renewcommand{\arraystretch}{1.5}
		\scalebox{0.45}{%
			\begin{tabular}{clccccccccccclllllcc}
				\hline
				\multirow{2}{*}{Model} & \multicolumn{1}{c}{\multirow{2}{*}{Constituent Models}}                                                    & \multicolumn{16}{c}{Parameter}                                                                                                                                                                                                                                                                                                                                                                                   & \multicolumn{1}{l}{\multirow{2}{*}{AIC}} & \multicolumn{1}{l}{\multirow{2}{*}{BIC}} \\ \cline{3-18}
				& \multicolumn{1}{c}{}                                                                                       & 1       & 2                          & 3                          & 4                          & 5                    & 6                    & 7                    & 8                    & 9                    & 10                   & 11                   & \multicolumn{1}{c}{12}     & \multicolumn{1}{c}{13}     & \multicolumn{1}{c}{14}     & \multicolumn{1}{c}{15}     & \multicolumn{1}{c}{16}     & \multicolumn{1}{l}{}                     & \multicolumn{1}{l}{}                     \\ \hline
				\multirow{4}{*}{I}     & neu PINGARCH(1,0) $(L=2)$                                                                                  & 28.90   & -22.02                     & 1.63                       & 0.01                       & 2.36                 & -0.19                & \multicolumn{1}{l}{} & \multicolumn{1}{l}{} & \multicolumn{1}{l}{} & \multicolumn{1}{l}{} & \multicolumn{1}{l}{} &                            &                            &                            &                            &                            & \multirow{2}{*}{9976.22}                 & \multirow{2}{*}{10008.14}                \\
				& $(u_1^{1},u_2^{1},u_{11}^{0},u_{21}^{0},u_{12}^{0},u_{22}^{0})$                                            & (20.81) & (11.86)                    & (6.63)                     & (0.04)                     & (0.71)               & (0.06)               & \multicolumn{1}{l}{} & \multicolumn{1}{l}{} & \multicolumn{1}{l}{} & \multicolumn{1}{l}{} & \multicolumn{1}{l}{} &                            &                            &                            &                            &                            &                                          &                                          \\ \cline{2-20} 
				& SARIMAX(3,1,1)(2,1,1){[}24{]} with additional exogenous variables $X_{t-25}$ and $X_{t-30}$                & -0.51   & -0.19                      & -0.10                      & -0.99                      & -0.03                & -0.01                & -1.00                & 0.27                 & 0.01                 & -0.07                & 0.09                 & \multicolumn{1}{c}{0.05}   &                            &                            &                            &                            & \multirow{2}{*}{4690.88}                 & \multirow{2}{*}{4759.83}                 \\
				&                                                                                                            & (0.02)  & (0.03)                     & (0.03)                     & (0.01)                     & (0.03)               & (0.03)               & (0.02)               & (0.05)               & (0.05)               & (0.07)               & (0.02)               & \multicolumn{1}{c}{(0.02)} &                            &                            &                            &                            &                                          &                                          \\ \hline
				\multirow{4}{*}{II}    & neu PINGARCH(2,0) $(L=3)$                                                                                  & 15.00   & \multicolumn{1}{l}{16.67}  & \multicolumn{1}{l}{6.46}   & \multicolumn{1}{l}{-2.19}  & -0.01                & 0.10                 & -2.69                & 0.18                 & 0.03                 & -0.61                & 0.32                 & \multicolumn{1}{c}{-0.48}  &                            &                            &                            &                            & \multirow{2}{*}{9821.70}                 & \multirow{2}{*}{9885.55}                 \\
				& $(u_1^1, u_2^1, u_{11}^0, u_{21}^0, u_{31}^0, u_{12}^0, u_{22}^0, u_{32}^0, u_{13}^0, u_{23}^0, u_{33}^0)$ & (2.67)  & \multicolumn{1}{l}{(3.41)} & \multicolumn{1}{l}{(2.92)} & \multicolumn{1}{l}{(0.28)} & (0.02)               & (0.01)               & (0.28)               & (0.03)               & (0.02)               & (0.67)               & (0.15)               & \multicolumn{1}{c}{(0.23)} &                            &                            &                            &                            &                                          &                                          \\ \cline{2-20} 
				& SARIMAX(4,0,4)(2,1,1){[}24{]} with additional exogenous variables $X_{t-6}$ and $X_{t-25}$                 & -0.50   & 0.12                       & 0.60                       & -0.01                      & 0.05                 & -0.39                & -0.63                & 0.28                 & -0.03                & -0.08                & -0.89                & \multicolumn{1}{c}{0.15}   & \multicolumn{1}{c}{0.02}   & \multicolumn{1}{c}{-0.08}  & \multicolumn{1}{c}{0.09}   & \multicolumn{1}{c}{0.08}   & \multirow{2}{*}{4708.99}                 & \multirow{2}{*}{4794.13}                 \\
				&                                                                                                            & (0.19)  & (0.16)                     & (0.07)                     & (0.07)                     & (0.19)               & (0.09)               & (0.05)               & (0.12)               & (0.03)               & (0.03)               & (0.02)               & \multicolumn{1}{c}{(0.04)} & \multicolumn{1}{c}{(0.04)} & \multicolumn{1}{c}{(0.05)} & \multicolumn{1}{c}{(0.03)} & \multicolumn{1}{c}{(0.03)} &                                          &                                          \\ \hline
				\multirow{4}{*}{III}   & neu PINGARCH(1,1) ($L=1$)                                                                                  & 26.97   & -1.93                      & 0.13                       & 0.02                       & \multicolumn{1}{l}{} & \multicolumn{1}{l}{} & \multicolumn{1}{l}{} & \multicolumn{1}{l}{} & \multicolumn{1}{l}{} & \multicolumn{1}{l}{} & \multicolumn{1}{l}{} &                            &                            &                            &                            &                            & \multirow{2}{*}{9944.00}                 & \multirow{2}{*}{9965.29}                 \\
				& $(u_1^1,u_{11}^0\,u_{21}^0,u_{31}^0)$                                                                      & (0.56)  & (0.03)                     & (0.01)                     & (0.00)                     & \multicolumn{1}{l}{} & \multicolumn{1}{l}{} & \multicolumn{1}{l}{} & \multicolumn{1}{l}{} & \multicolumn{1}{l}{} & \multicolumn{1}{l}{} & \multicolumn{1}{l}{} &                            &                            &                            &                            &                            &                                          &                                          \\ \cline{2-20} 
				& SARIMAX(2,0,1)(2,1,2){[}24{]} with additional exogenous variables $X_{t-25}$, $X_{t-27}$ and $X_{t-30}$    & 0.12    & 0.09                       & -0.62                      & -0.59                      & -0.02                & -0.33                & -0.55                & 0.25                 & 0.02                 & -0.08                & 0.07                 & \multicolumn{1}{c}{0.05}   & \multicolumn{1}{c}{0.06}   & \multicolumn{1}{c}{}       &                            &                            & \multirow{2}{*}{4655.68}                 & \multirow{2}{*}{4724.85}                 \\
				&                                                                                                            & (0.08)  & (0.05)                     & (0.08)                     & (0.08)                     & (0.03)               & (0.08)               & (0.07)               & (0.04)               & (0.04)               & (0.06)               & (0.03)               & \multicolumn{1}{c}{(0.03)} & \multicolumn{1}{c}{(0.03)} & \multicolumn{1}{c}{}       &                            &                            &                                          &                                          \\ \hline
				\multirow{4}{*}{IV}    & neu NB-INGARCH(1,0) ($L=3$)                                                                                & 3.01    & 0.13                       & 15.43                      & 5.67                       & -15.26               & 1.29                 & 0.57                 & -0.46                & 152.76               & -11.18               & \multicolumn{1}{l}{} &                            &                            &                            &                            &                            & \multirow{2}{*}{8351.70}                 & \multirow{2}{*}{9101.97}                 \\
				& ($n, u_1^1, u_2^1, u_3^1, u_{11}^0, u_{21}^0, u_{12}^0, u_{22}^0, u_{13}^0, u_{23}^0$)                     & (9.03)  & (2.03)                     & (0.10)                     & (0.10)                     & (0.08)               & (0.03)               & (0.09)               & (0.10)               & (0.10)               & (0.10)               & \multicolumn{1}{l}{} &                            &                            &                            &                            &                            &                                          &                                          \\ \cline{2-20} 
				& SARIMAX(2,0,0)(2,1,1){[}24{]} with additional exogenous variables $X_{t-18}$ and $X_{t-25}$                & 0.12    & 0.15                       & -0.09                      & -0.10                      & -0.84                & 0.01                 & 0.02                 & -0.01                & 0.06                 & 0.05                 & \multicolumn{1}{l}{} &                            &                            &                            &                            &                            & \multirow{2}{*}{3728.47}                 & \multirow{2}{*}{3875.27}                 \\
				&                                                                                                            & (0.03)  & (0.03)                     & (0.03)                     & (0.02)                     & (0.02)               & (0.00)               & (0.00)               & (0.00)               & (0.03)               & (0.03)               & \multicolumn{1}{l}{} &                            &                            &                            &                            &                            &                                          &                                          \\ \hline
				\multirow{4}{*}{V}     & neu NB-INGARCH(2,0) ($L=1$)                                                                                & 3.02    & 15.43                      & -1.34                      & 11.54                      & 1.32                 & \multicolumn{1}{l}{} & \multicolumn{1}{l}{} & \multicolumn{1}{l}{} & \multicolumn{1}{l}{} & \multicolumn{1}{l}{} & \multicolumn{1}{l}{} &                            &                            &                            &                            &                            & \multirow{2}{*}{7391.51}                 & \multirow{2}{*}{8442.65}                 \\
				& ($n, u_1^1, u_{11}^0, u_{21}^0, u_{31}^0$)                                                                 & (2.20)  & (1.16)                     & (0.08)                     & (0.04)                     & (0.04)               & \multicolumn{1}{l}{} & \multicolumn{1}{l}{} & \multicolumn{1}{l}{} & \multicolumn{1}{l}{} & \multicolumn{1}{l}{} & \multicolumn{1}{l}{} &                            &                            &                            &                            &                            &                                          &                                          \\ \cline{2-20} 
				& SARIMAX(4,0,0)(2,1,1){[}24{]} with additional exogenous variables $X_{t-18}$ and $X_{t-25}$                & 0.12    & 0.15                       & 0.02                       & 0.02                       & -0.07                & -0.10                & -0.85                & 0.01                 & 0.00                 & -0.01                & 0.06                 & \multicolumn{1}{c}{0.03}   & \multicolumn{1}{c}{0.06}   &                            &                            &                            & \multirow{2}{*}{3885.14}                 & \multirow{2}{*}{3900.02}                 \\
				&                                                                                                            & (0.03)  & (0.02)                     & (0.02)                     & (0.03)                     & (0.03)               & (0.02)               & (0.01)               & (0.00)               & (0.00)               & (0.00)               & (0.03)               & \multicolumn{1}{c}{(0.03)} & \multicolumn{1}{c}{(0.03)} &                            &                            &                            &                                          &                                          \\ \hline
				\multirow{4}{*}{VI}    & neu NB-INGARCH(1,1) ($L=1$)                                                                                & 14.90   & 11.69                      & 14.57                      & 8.44                       & 0.73                 & \multicolumn{1}{l}{} & \multicolumn{1}{l}{} & \multicolumn{1}{l}{} & \multicolumn{1}{l}{} & \multicolumn{1}{l}{} & \multicolumn{1}{l}{} &                            &                            &                            &                            &                            & \multirow{2}{*}{7245.16}                 & \multirow{2}{*}{7114.30}                 \\
				& ($n, u_1^1, u_{11}^0, u_{21}^0, u_{31}^0$)                                                                 & (0.01)  & (0.84)                     & (0.03)                     & (0.11)                     & (0.05)               & \multicolumn{1}{l}{} & \multicolumn{1}{l}{} & \multicolumn{1}{l}{} & \multicolumn{1}{l}{} & \multicolumn{1}{l}{} & \multicolumn{1}{l}{} &                            &                            &                            &                            &                            &                                          &                                          \\ \cline{2-20} 
				& SARIMAX(2,0,0)(2,1,1){[}24{]} with additional exogenous variable $X_{t-25}$                                & 0.52    & 0.04                       & -0.33                      & -0.63                      & -0.07                & -0.32                & -0.53                & 4.46                 & 1.34                 & -2.08                & 0.08                 &                            &                            &                            &                            &                            & \multirow{2}{*}{3662.17}                 & \multirow{2}{*}{3756.09}                 \\
				&                                                                                                            & (0.19)  & (0.06)                     & (0.09)                     & (0.08)                     & (0.03)               & (0.06)               & (0.06)               & (0.02)               & (0.08)               & (0.05)               & (0.02)               &                            &                            &                            &                            &                            &                                          &                                          \\ \hline
		\end{tabular}}
		\label{tab6}
	\end{table}
	\begin{table}[H]
		\centering
		\caption{RMSE of out-of-sample forecasts for better performing neural INGARCH  models fitted to ED data.}
		\renewcommand{\arraystretch}{1.2}
		\scalebox{0.8}{%
			\begin{tabular}{lc}
				\hline
				\textbf{Model}                                          & \textbf{RMSE} \\ \hline
				neu - NB-INGARCH(1,1) and SARIMAX(2,0,0)(2,1,1){[}24{]} & 11.93         \\
				neu - NB-INGARCH(2,0) and SARIMAX(4,0,0)(2,1,1){[}24{]} & 13.55         \\
				neu - NB-INGARCH(1,0) and SARIMAX(2,0,0)(2,1,1){[}24{]} & 15.89         \\
				neu - PINGARCH(2,0) and SARIMAX(4,0,4)(2,1,1){[}24{]}   & 22.67         \\ \hline
		\end{tabular}}
		\label{tab7}
	\end{table}
\end{landscape} 
\section*{Acknowledgements}
The first author is grateful to the Cochin University of Science and Technology for the financial support.
\bibliography{ref}
\bibliographystyle{apalike}
\begin{appendices}
	\section{Appendix}
	\subsection{Some useful results from \cite{doukhan2019absolute}}\label{App2}
	
	\noindent \cite{doukhan2019absolute} assume that the process $\{X_t\}$, which is defined on some probability space \((\Omega, \mathcal{F}, P)\), satisfy the equations
	\begin{align}
		X_t \mid \mathcal{F}_{t-1} &\sim D(\lambda_t),  \\ \nonumber
		\lambda_t &= f(X_{t-1}, \dots, X_{t-p}; \lambda_{t-1}, \dots, \lambda_{t-q}), 
	\end{align}
	where \(\mathcal{F}_{t-1}\)  is the $\sigma$-field generated by \(X_{t-1},  X_{t-2},  \dots)\) and \(\{D(\lambda) : \lambda \in [0, \infty)\}\) is some family of univariate distributions. 
	The following definition of a distance metric helps better understand assumption $\mathcal{A}3$ and \ref{App3}.
	\begin{defn}
		The total variation distance between probability measures \( P_1 \) and \( P_2 \), denoted by  \( \text{TV}(P_1, P_2)\) is defined as
		\( \text{TV}(P_1, P_2) = \sup_{A \in \mathcal{B}} |P_1(A) - P_2(A)| \), where \(\mathcal{B} \) is a class of Borel sets.
	\end{defn}
	
	Considering $Y_t = \{X_{t-1},\ldots,X_{t-p+1},\lambda_t,\ldots,\lambda_{t-q+1}\}$, \cite{doukhan2019absolute} impose the following assumptions:
	
	\begin{itemize}
		\item[($\mathcal{A}1$)] There exist positive constants \\\( a_1, \ldots, a_{p-1}, b_0, \ldots, b_{q-1}, \kappa < 1 \), and \( a_0 < \infty \) such that, for
		\begin{equation}
			V(x_1, \ldots, x_{p-1}; \lambda_0, \ldots, \lambda_{q-1}) = \sum_{i=1}^{p-1} a_i x_i + \sum_{j=0}^{q-1} b_j \lambda_j, \nonumber
		\end{equation}
		the condition
		\[
		\mathbb{E}(V(Y_t) \mid Y_{t-1}) \leq \kappa V(Y_{t-1}) + a_0
		\]
		is fulfilled with probability 1.
		
		\item[($\mathcal{A}2$)]  The function \( f \) is measurable and there exist nonnegative constants \( c_1, \ldots, c_q \) with \( c_1 + \cdots + c_q < 1 \) such that
		\[
		|f(x_1, \ldots, x_{p}; \lambda_1, \ldots, \lambda_q) - f(x_1, \ldots, x_{p}; \lambda_1', \ldots, \lambda_q')| \leq \sum_{i=1}^q c_i |\lambda_i - \lambda_i'|
		\]
		for all \( x_1, \ldots, x_p \in \mathbb{R}, \lambda_1, \ldots, \lambda_q, \lambda_1', \ldots, \lambda_q' \geq 0 \).
		
		\item[($\mathcal{A}3$)] There exists some constant \( \delta \in (0, \infty) \) such that
		\[
		\text{TV}(D((\lambda), (\lambda')) \leq 1 - e^{-\delta |\lambda - \lambda'|}
		\]
		for all \( \lambda, \lambda' \geq 0 \).
	\end{itemize}
	\begin{rem}
		For  \( p = q = 1 \), \( Y_t \) reduces to \( \lambda_t \). Condition ($\mathcal{A}1$) follows from the following drift condition 
		
		\begin{itemize}
			\item[($\mathcal{A}1'$)] There exist constants \( \tilde{a}_0 \in [0, \infty), \tilde{a}_1, \ldots, \tilde{a}_p, \tilde{b}_1, \ldots, \tilde{b}_q \in [0, 1] \) with \( \sum_{i=1}^p \tilde{a}_i + \sum_{j=1}^q \tilde{b}_j < 1 \) such that
			\[
			\lambda_t \leq \tilde{a}_0 + \tilde{a}_1 X_{t-1} + \cdots + \tilde{a}_p X_{t-p} + \tilde{b}_1 \lambda_{t-1} + \cdots + \tilde{b}_q \lambda_{t-q}.
			\]
		\end{itemize}
	\end{rem}
	Using a concept of coupling  proposed by \cite{doukhan2019absolute},  \(\lambda_t\) can be expressed as
	\[
	\lambda_t = g(X_{t-1}, X_{t-2}, \dots)
	\]
	for some measurable function \(g\). This yields ergodicity of the process \(\{\lambda_t\}_{t \in \mathbb{Z}}\) and also of the bivariate process \(\{(X_t, \lambda_t)\}_{t \in \mathbb{Z}}\) as stated in the following theorem.
	
	\begin{thm}
		\label{thma2}
		Suppose that the conditions  \((\mathcal{A}1) \rightarrow (\mathcal{A}3)\) are satisfied. Then a stationary version of the process \(\{(X_t, \lambda_t)\}_{t \in \mathbb{Z}}\) is ergodic.
	\end{thm}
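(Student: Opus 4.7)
The plan is to apply the stationarity and ergodicity theorem of \cite{doukhan2019absolute}, reproduced in Appendix \ref{App2}, by verifying the three hypotheses $\mathcal{A}1$, $\mathcal{A}2$, $\mathcal{A}3$ for the sp NB-INGARCH(1,1) recursion, and then to establish $E[X_t]<\infty$ via a separate recursive moment bound. Since $p=q=1$, the state vector $Y_t$ collapses to $\lambda_t$, which simplifies the drift structure considerably.

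First I would verify $\mathcal{A}1$ by bounding $sp$ linearly. The key inequality is (\ref{ineq}), which gives $sp(x)\le c\ln(2)+\text{ReLU}(x)$, hence
\[
\lambda_t \le c\ln(2)+\overline{\alpha_0}+\overline{\alpha_1}X_{t-1}+\overline{\beta_1}\lambda_{t-1}.
\]
Taking $V(\lambda)=b_0\lambda$ with $b_0=1$, $a_0=c\ln(2)+\overline{\alpha_0}$ and $\kappa=\overline{\alpha_1}+\overline{\beta_1}<1$, and using $E[X_{t-1}\mid Y_{t-2}]=\lambda_{t-1}$, the drift inequality $E[V(Y_t)\mid Y_{t-1}]\le a_0+\kappa V(Y_{t-1})$ follows. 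Next, $\mathcal{A}2$ is immediate from the Lipschitz property of the softplus function (property (ii) in Section \ref{spn}): since $sp$ has Lipschitz constant $1$, for any $x_1$ and any $l_1,l_1'\ge 0$,
\[
|sp(\alpha_0+\alpha_1 x_1+\beta_1 l_1)-sp(\alpha_0+\alpha_1 x_1+\beta_1 l_1')|\le |\beta_1||l_1-l_1'|,
\]
and the hypothesis $|\beta_1|<1$ supplies the required contraction constant.

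The step I expect to be hardest is $\mathcal{A}3$, which requires a total-variation bound of the form $TV(D(\lambda_1),D(\lambda_2))\le 1-e^{-\delta|\lambda_1-\lambda_2|}$ for the negative binomial family parameterized through $\lambda=n(1-p)/p$. Unlike the Poisson case, the NB tail makes a direct coupling argument delicate. My plan is to import Lemma 7 of \cite{gorgi2020beta} (Appendix \ref{App3}), which provides exactly the needed bound with an exponent proportional to $|(1-p_1)/p_1-(1-p_2)/p_2|$, i.e., proportional to $|\lambda_1-\lambda_2|/n$; this delivers $\mathcal{A}3$ with $\delta=1$ (after absorbing $n$ into the constant). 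With $\mathcal{A}1$--$\mathcal{A}3$ in hand, Theorem \ref{thma2} yields a stationary ergodic version of $\{(X_t,\lambda_t)\}$.

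Finally, to obtain $E[X_t]<\infty$, I would iterate the one-step drift inequality via the tower property. Writing $\bar{\mathcal{C}}_{10}=c\ln(2)+\overline{\alpha_0}$ and $\bar{\mathcal{C}}_{11}=\overline{\alpha_1}+\overline{\beta_1}$, the bound $E_{t-2}[\lambda_t]\le \bar{\mathcal{C}}_{10}+\bar{\mathcal{C}}_{11}\lambda_{t-1}$ iterates to
\[
E_{t-h}[\lambda_t]\le \sum_{i=0}^{h-2}\bar{\mathcal{C}}_{11}^{i}\bar{\mathcal{C}}_{10}+\bar{\mathcal{C}}_{11}^{h-1}\lambda_{t-h+1}.
\]
Since $\bar{\mathcal{C}}_{11}<1$, taking $h\to\infty$ (using stationarity to ensure $\bar{\mathcal{C}}_{11}^{h-1}\lambda_{t-h+1}\to 0$ in expectation) yields $E[\lambda_t]\le \bar{\mathcal{C}}_{10}/(1-\bar{\mathcal{C}}_{11})<\infty$, and then (\ref{exp}) gives $E[X_t]=E[\lambda_t]<\infty$, completing the proof.
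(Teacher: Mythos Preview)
You have misidentified which theorem you are being asked to prove. The statement labeled \texttt{thma2} is the \emph{abstract} ergodicity result of \cite{doukhan2019absolute}, stated in Appendix~\ref{App2}: its hypotheses are that $\mathcal{A}1$--$\mathcal{A}3$ hold for some general observation-driven process, and its conclusion is that a stationary version of $\{(X_t,\lambda_t)\}$ is ergodic. The paper does not supply a proof of this result at all; it simply writes ``For a proof of the theorem, one can refer \cite{doukhan2019absolute}.'' A proof would require the coupling construction of that reference, not a verification of $\mathcal{A}1$--$\mathcal{A}3$ in any particular model.

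What you have written is instead a proof of Theorem~\ref{thm1} (the sp NB-INGARCH(1,1) stationarity and first-moment theorem), which \emph{applies} Theorem~\ref{thma2} after checking its hypotheses for the specific negative-binomial softplus recursion. Your argument for that purpose is essentially identical to the paper's own proof of Theorem~\ref{thm1}: the same linear upper bound on $sp$ via (\ref{ineq}) for $\mathcal{A}1$, the Lipschitz property for $\mathcal{A}2$, the Gorgi bound (\ref{a3org}) for $\mathcal{A}3$, and the same iterated drift inequality for $E[\lambda_t]<\infty$. So as a proof of Theorem~\ref{thm1} your proposal is correct and matches the paper, but as a proof of the stated Theorem~\ref{thma2} it is off target: you are assuming the conclusion of \ref{thma2} (via ``Theorem~\ref{thma2} yields\ldots'') rather than establishing it.
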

	For a proof of the theorem, one can refer \cite{doukhan2019absolute}.
	
	\subsection{Lemma 7 of \cite{gorgi2020beta}}\label{App3}
	\begin{lem}
		The total variation distance between $\mathrm{BNB}_{\lambda_1}$ and $\mathrm{BNB}_{\lambda_2}$ satisfies the inequality
		\[
		\mathrm{TV}(\mathrm{BNB}_{\lambda_1}, \mathrm{BNB}_{\lambda_2}) \leq 1 - \exp(-|\lambda_1 - \lambda_2|)
		\]
		for any $\lambda_1, \lambda_2 \in \mathbb{R}^+$ and any $(n, \alpha) \in \mathbb{R}^+ \times (1, \infty)$. $\mathrm{BNB}_{\lambda_i}$, $i=1,2,$ denote probability measures corresponding to Beta-negative binomial distributions with means $\lambda_1$ and $\lambda_2$ respectively.
	\end{lem}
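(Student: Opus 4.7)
The proof will proceed via a coupling argument exploiting the hierarchical Poisson--Gamma--Beta representation of the BNB distribution. Recall that $X\sim\mathrm{BNB}(n,\alpha,\beta)$ admits the three-stage description $p\sim\mathrm{Beta}(\alpha,\beta)$, $\Lambda\mid p\sim\mathrm{Gamma}(n,(1-p)/p)$ in the scale parametrization, and $X\mid\Lambda\sim\mathrm{Poi}(\Lambda)$, with $E[\Lambda]=n\beta/(\alpha-1)=\lambda$ whenever $\alpha>1$. Since the total variation distance is controlled by any coupling via $\mathrm{TV}(P_1,P_2)\le\Pr(X_1\ne X_2)$, the plan is to construct a joint law on $(X_1,X_2)$ whose marginals are $\mathrm{BNB}_{\lambda_1}$ and $\mathrm{BNB}_{\lambda_2}$ and for which $\Pr(X_1=X_2)$ admits a tractable lower bound.

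Without loss of generality, assume $\lambda_1\le\lambda_2$, which for fixed $(n,\alpha)$ is equivalent to $\beta_1\le\beta_2$. The first step is a monotone coupling of the two Beta variables using the Gamma-ratio representation: let $G_0\sim\mathrm{Gamma}(\alpha,1)$, $H_1\sim\mathrm{Gamma}(\beta_1,1)$ and $H'\sim\mathrm{Gamma}(\beta_2-\beta_1,1)$ be independent, set $H_2=H_1+H'$, and define $p_i=G_0/(G_0+H_i)$. Then $p_1\ge p_2$ almost surely, while each $p_i$ has the correct Beta marginal. The second step lifts this to a monotone coupling of the latent rates: with a single independent $G\sim\mathrm{Gamma}(n,1)$, set $\Lambda_i=G\cdot(1-p_i)/p_i$, so that $\Lambda_1\le\Lambda_2$ almost surely, each $\Lambda_i$ has the correct Gamma mixture distribution, and $E[\Lambda_2-\Lambda_1]=\lambda_2-\lambda_1$ by the tower property.

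Finally, conditional on $(\Lambda_1,\Lambda_2)$, apply the standard Poisson thinning coupling: take $X_1\sim\mathrm{Poi}(\Lambda_1)$ and let $X_2=X_1+Z$ with $Z\mid(\Lambda_1,\Lambda_2)\sim\mathrm{Poi}(\Lambda_2-\Lambda_1)$ independent of $X_1$. This yields $X_i$ with the correct conditional Poisson law given $\Lambda_i$, hence the correct unconditional $\mathrm{BNB}_{\lambda_i}$ marginal, and
\[
\Pr(X_1=X_2\mid\Lambda_1,\Lambda_2)=\Pr(Z=0\mid\Lambda_1,\Lambda_2)=e^{-(\Lambda_2-\Lambda_1)}.
\]
Taking expectations and using Jensen's inequality applied to the convex function $x\mapsto e^{-x}$,
\[
\Pr(X_1=X_2)=E\bigl[e^{-(\Lambda_2-\Lambda_1)}\bigr]\ge e^{-E[\Lambda_2-\Lambda_1]}=e^{-|\lambda_1-\lambda_2|},
\]
which delivers the desired bound $\mathrm{TV}(\mathrm{BNB}_{\lambda_1},\mathrm{BNB}_{\lambda_2})\le 1-e^{-|\lambda_1-\lambda_2|}$.

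The main obstacle is primarily bookkeeping rather than technical depth: one must carefully check that the two-layer coupling (Beta via shared $G_0$, then Gamma via shared $G$) preserves each marginal BNB distribution, since both layers reuse random inputs across the two copies. Once the marginals are verified, the monotonicity $\Lambda_1\le\Lambda_2$ together with the Jensen step renders the remainder of the argument essentially automatic. A minor subtlety is the requirement $\alpha>1$, needed so that $E[\Lambda_i]=\lambda_i<\infty$, but this is exactly the standing hypothesis of the lemma.
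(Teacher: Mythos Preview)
Your proof is correct and follows essentially the same route as the paper's (which reproduces Gorgi's Lemma~7): both exploit the Poisson--Gamma--Beta hierarchical representation, couple the latent Poisson rates through shared mixing variables, and close with Jensen's inequality applied to $x\mapsto e^{-x}$. The only presentational difference is that you build the Poisson thinning coupling and the monotone Beta coupling explicitly, whereas the paper quotes the Poisson total-variation bound from Adell (2005) and omits the Beta-level step from the excerpt; your version is therefore somewhat more self-contained.
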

	\begin{proof}
		Let \( P_\mu \) denote the probability measure of a Poisson distribution with mean \( \mu \) (or $Poi(\mu)$). As demonstrated by \cite{adell2005sharp}, the total variation distance between \( P_{\mu_1} \) and \( P_{\mu_2} \)  bounded as 
		\begin{equation}
			\label{s1}
			\text{TV}(P_{\mu_1}, P_{\mu_2}) \leq 1 - \exp(-|\mu_1 - \mu_2|). 
		\end{equation}
		
		Secondly, we know that if a random variable \( X \) conditional on \( \mu \) has a Poisson distribution with mean \( \mu \), \big(or equivalently $X \big| \mu \sim Poi(\mu)$\big) where \( \mu = \frac{(1 - p) }{p} \mathfrak{z}\), \( p \in (0, 1) \) and \( \mathfrak{z} \) has a gamma distribution with shape parameter \( n \) and scale parameter equal to 1, then \( X \) has a negative binomial distribution, i.e., \( X \sim \text{NB}(n, p) \), with probability mass function (pmf) given by
		\[
		P(X = x) = \frac{\Gamma(x + n)}{\Gamma(x + 1) \Gamma(n)} p^n(1 - p)^x, x=0,1,2,\ldots,\; 0<p \leq 1,\; n >0.
		\]
		
		Next, we use the representation of the negative binomial as a Poisson-gamma mixture to derive an upper bound for the total variation distance between \( \text{NB}_{p_1} \) and \( \text{NB}_{p_2} \), \( p_1, p_2 \in (0, 1) \), where \( \text{NB}_p \) denotes the probability measure of a negative binomial distribution \( \text{NB}(n, p) \). In particular, we set \( \mu_1(\mathfrak{z}) = \frac{(1 - p_1)}{p_1} \mathfrak{z} \) and \( \mu_2(\mathfrak{z}) = \frac{(1 - p_2)}{p_2} \mathfrak{z} \), and $P_{\mu_1(\mathfrak{z})}$ and $P_{\mu_2(\mathfrak{z})}$ as the probability measures corresponding to $Poi(\mu_1)$ and $Poi(\mu_2)$ respectively. Hence, we
		obtain  \( \text{NB}_{p_1}(A) = E[ P_{\mu_1(\mathfrak{z})}(A) ]\) and \( \text{NB}_{p_2}(A) = E[ P_{\mu_2(\mathfrak{z})}(A)] \) for any \( A \subseteq \mathbb{N} \). Therefore, from the expression of the total variation distance and the upper bound in (\ref{s1}), we obtain
		\begin{align}
			\label{a3}
			\text{TV}(\text{NB}_{p_1}, \text{NB}_{p_2}) &\leq E[ \text{TV}(P_{\mu_1(\mathfrak{z})}, P_{\mu_2(\mathfrak{z})})]\\ \nonumber
			&\leq 1 - \exp \left( - E[ |\mu_1(\mathfrak{z}) - \mu_2(\mathfrak{z})|] \right)\\ \nonumber
			&\leq 1 - \exp \left( - n \left| \frac{1 - p_1}{p_1} - \frac{1 - p_2}{p_2} \right| \right). 
		\end{align}
		Note that we only require the proof till (\ref{a3}) to arrive at (\ref{a3org}). The rest of the proof relies on the representation of the Beta - negative binomial (BNB) as a beta mixture of negative binomials to obtain the desired upper bound for \( \text{TV}(\text{BNB}_{\lambda_1}, \text{BNB}_{\lambda_2}) \). 
	\end{proof}
	\subsection{Model diagnostic analysis for Syphilis counts data}\label{diag}
	\begin{figure}[H]
		\centering
		\includegraphics[scale=0.6]{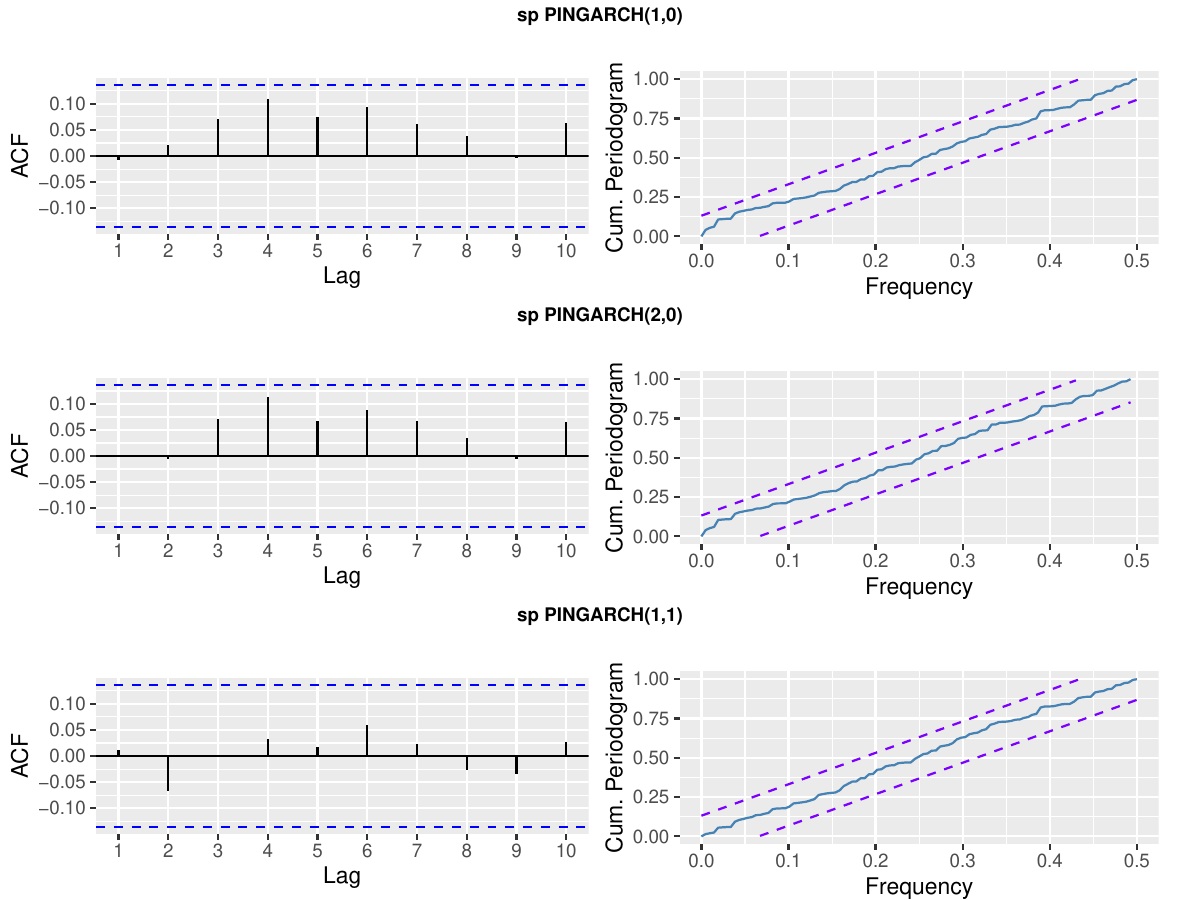}
		\caption{ACF plots and cumulative periodograms for residuals of softplus PINGARCH models fitted to Syphilis data }
		\label{fig7}
	\end{figure}
	
	\begin{figure}[H]
		\centering
		\includegraphics[scale=0.66]{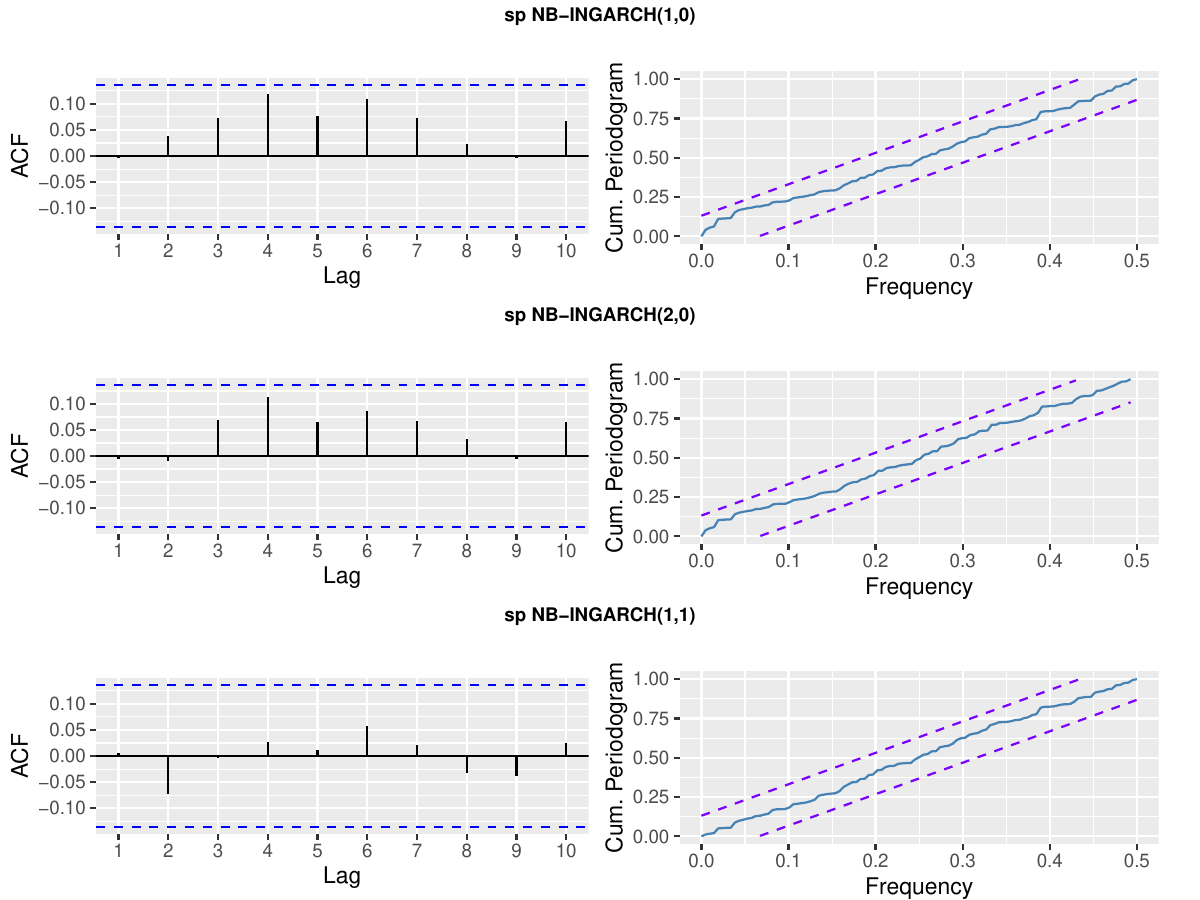}
		\caption{ACF plots and cumulative periodograms for residuals of softplus NB-INGARCH models fitted to Syphilis data }
		\label{fig8}
	\end{figure}
	
	\subsection{Additional plots for analysis of ED data}\label{explore}
	\begin{figure}[H]
		\centering
		\includegraphics[scale=0.66]{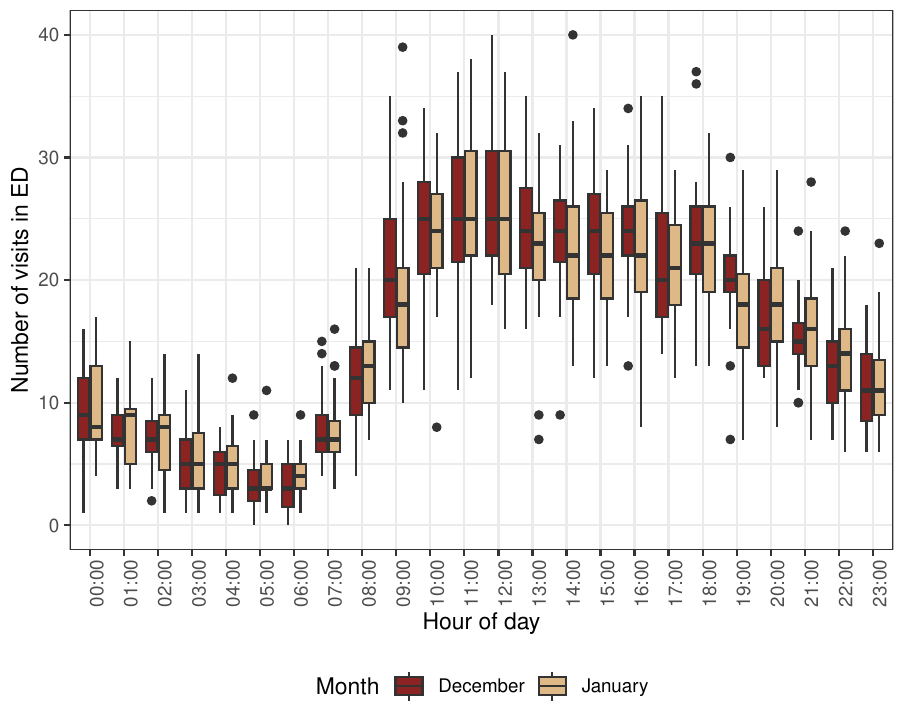}
		\caption{Boxplot of ED arrivals based on hours of a day (in 24 - hour format).}
		\label{fig2}
	\end{figure}
	\begin{figure}[H]
		\centering
		\includegraphics[scale=0.5]{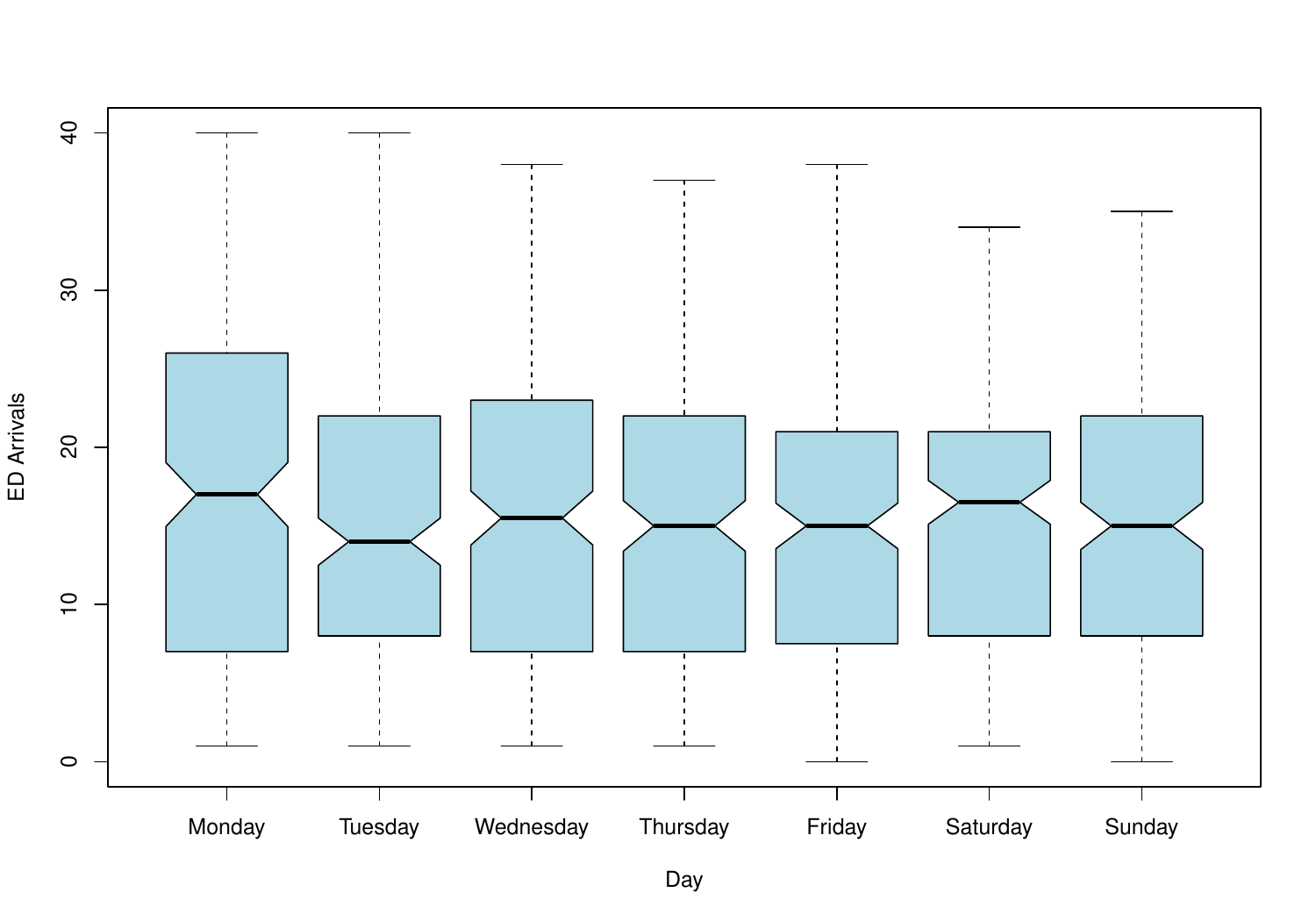}
		\caption{Boxplot of ED arrivals based on days of the week.}
		\label{fig3}
	\end{figure}
	\begin{figure}[H]
		\centering
		\includegraphics[scale=0.6]{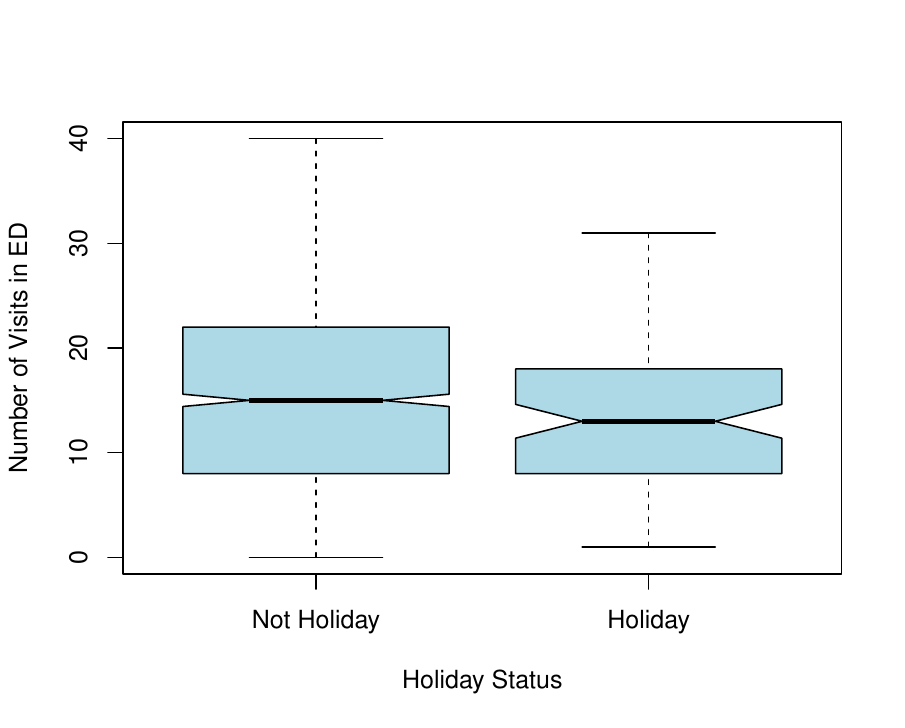}
		\caption{Boxplot of ED arrivals based on school and bank holidays.}
		\label{fig4}
	\end{figure}
	\begin{figure}[H]
		\centering
		\includegraphics[scale=0.66]{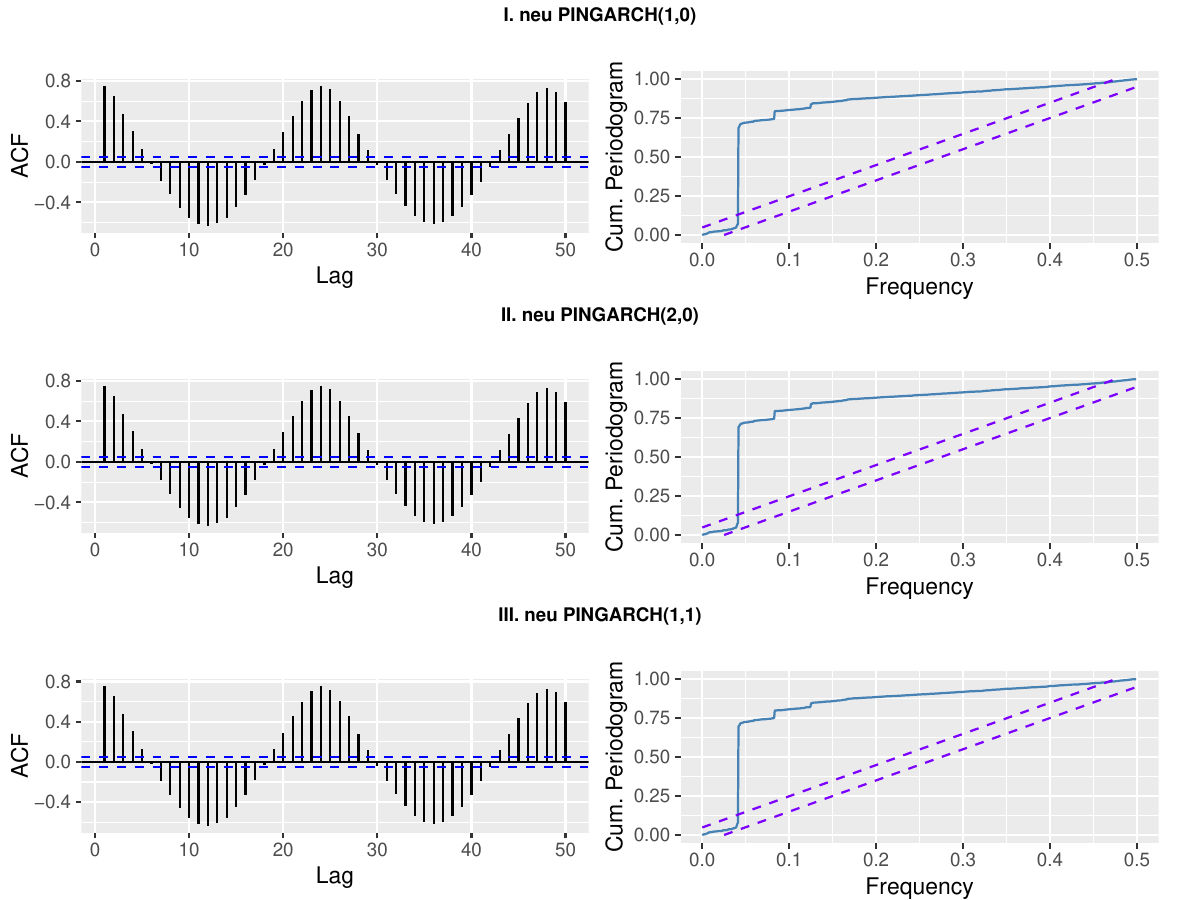}
		\caption{ACF plots and cumulative periodograms for residuals of neu PINGARCH fits to the data}
		\label{fig5}
	\end{figure}
	\begin{figure}[H]
		\centering
		\includegraphics[scale=0.6]{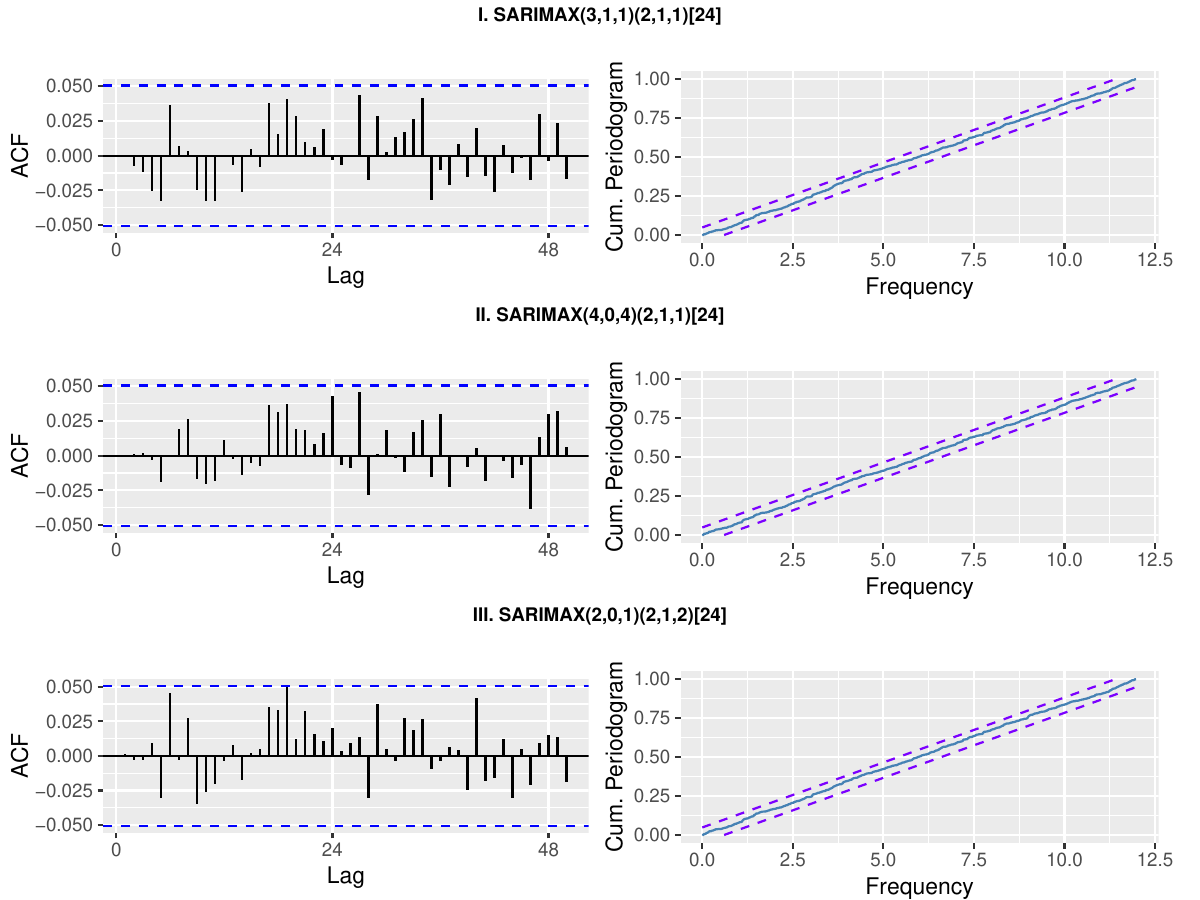}
		\caption{ACF plots and cumulative periodograms for residuals of SARIMAX fits to the residuals of neural Poisson fits. }
		\label{fig6}
	\end{figure}
	\begin{figure}[H]
		\centering
		\includegraphics[scale=0.6]{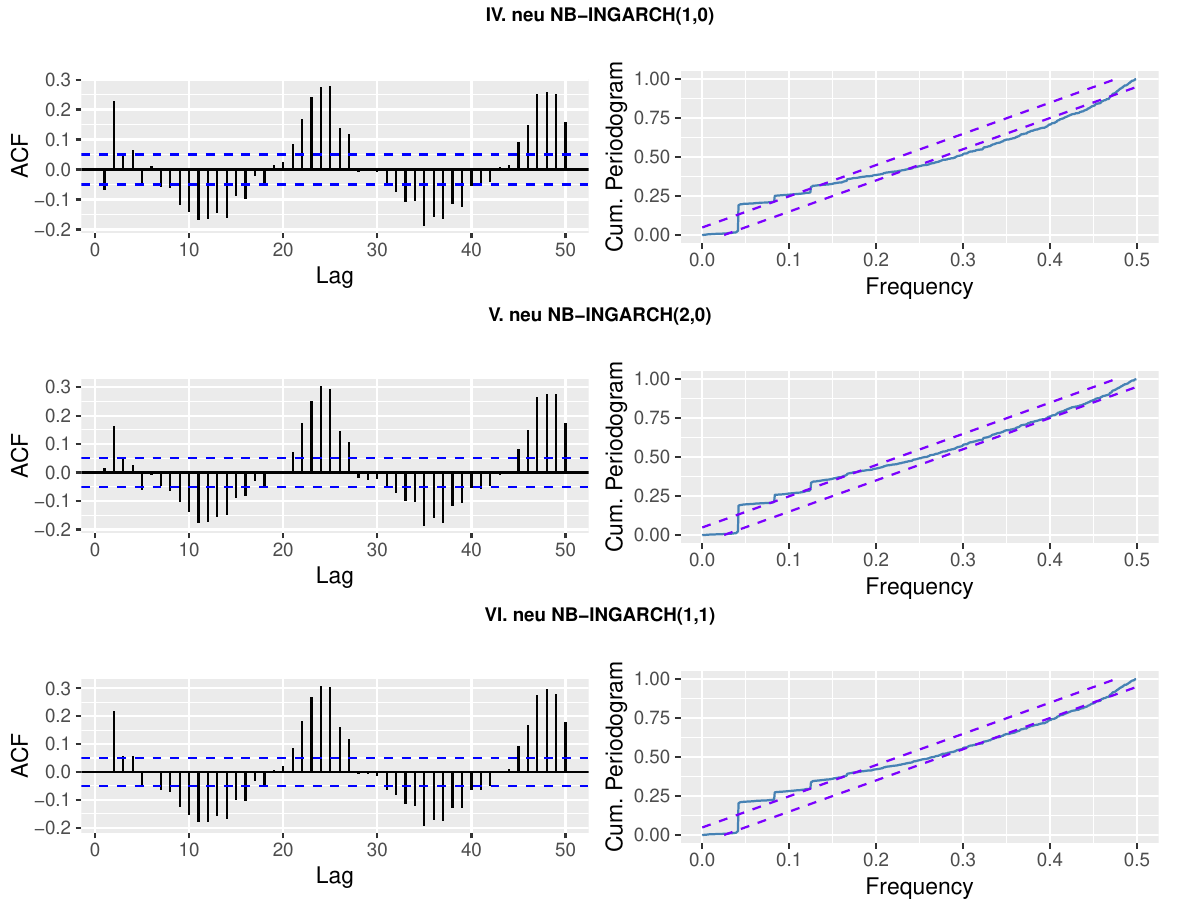}
		\caption{ACF plots and cumulative periodograms for residuals of neu NB-INGARCH fits to the data}
		\label{fig9}
	\end{figure}
	\begin{figure}[H]
		\centering
		\includegraphics[scale=0.66]{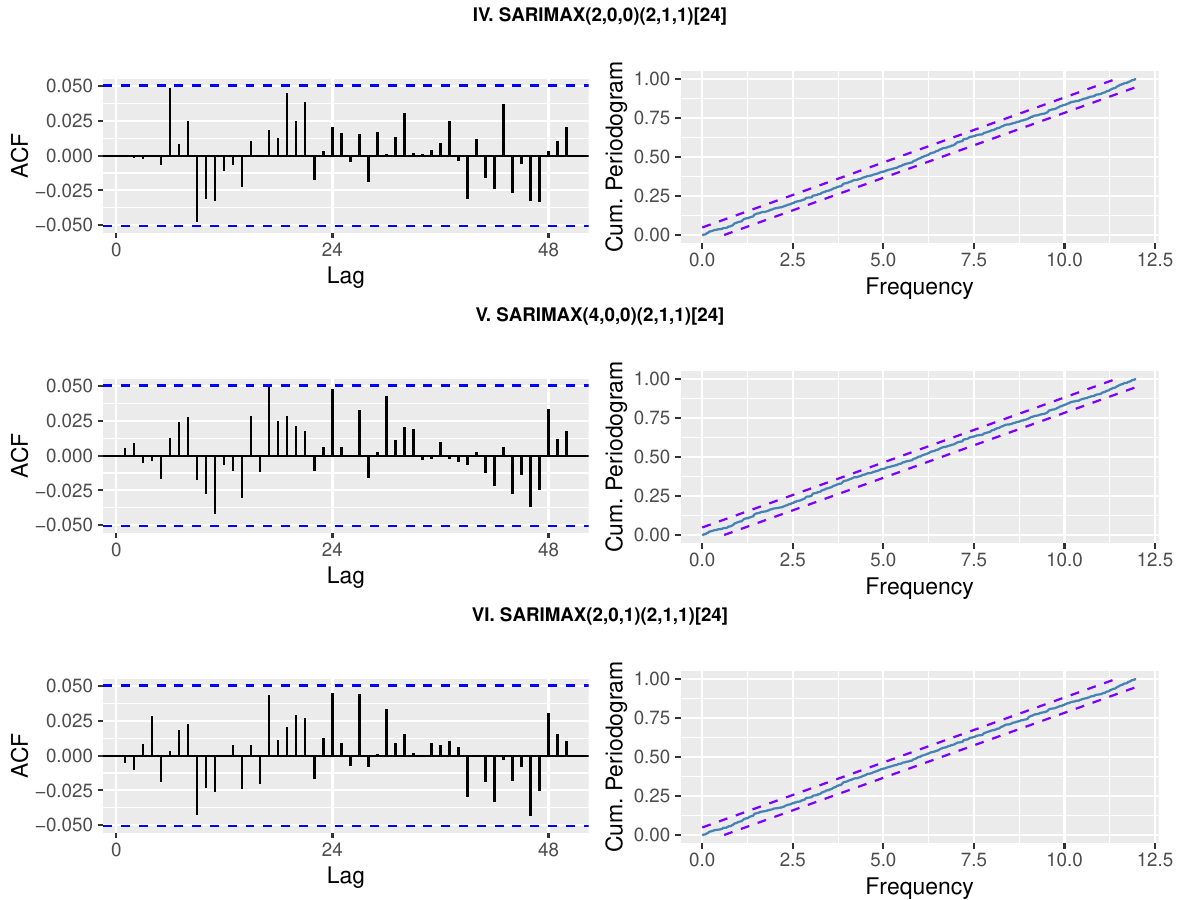}
		\caption{ACF plots and cumulative periodograms for residuals of SARIMAX fits to the residuals of neural NB-INGARCH fits.}
		\label{fig10}
	\end{figure}
\end{appendices}
\end{document}